\def\degseq{\ensuremath{\mathbf{d}}}
\def\rr#1{\ensuremath{\mathbf m_{h,l,d,t}(#1)}}
\def\rrr{\ensuremath{\mathbf m_{h,l,d,t}}}
\def\ie{i.e.,\xspace}
\def\eg{e.g.\xspace}
\def\etal{~et\,al.\xspace}
\def\ipwl{\textsc{Inc-Powerlaw}\xspace}
\def\igen{\textsc{Inc-Gen}\xspace}
\def\ireg{\textsc{Inc-Reg}\xspace}
\def\pld{\textsc{Pld}\xspace}
\def\parintra{\textsc{Intra-Run}\xspace}
\def\parinter{\textsc{Inter-Run}\xspace}
\def\Oh#1{\ensuremath{\mathcal O\!\left(#1\right)}}
\title{Engineering Uniform Sampling of Graphs with a Prescribed Power-law Degree Sequence} 
\titlerunning{Uniform Sampling of Power-law Graphs} 
\author{Daniel Allendorf}{Goethe University Frankfurt, Germany}{dallendorf@ae.cs.uni-frankfurt.de}{}{}
\author{Ulrich Meyer}{Goethe University Frankfurt, Germany}{umeyer@ae.cs.uni-frankfurt.de}{}{}
\author{Manuel Penschuck}{Goethe University Frankfurt, Germany}{mpenschuck@ae.cs.uni-frankfurt.de}{}{}
\author{Hung Tran}{Goethe University Frankfurt, Germany}{htran@ae.cs.uni-frankfurt.de}{}{}
\author{Nick Wormald}{Monash University, Australia}{nicholas.wormald@monash.edu}{}{}
\authorrunning{D.~Allendorf, U.~Meyer, M.~Penschuck, H.~Tran and N.~Wormald}
\keywords{Random Graphs, Graph Generator, Uniform Sampling, Power-law Degree Distribution}
\begin{document}

\maketitle

\begin{abstract}
	We consider the following common network analysis problem:
	given a degree sequence $\degseq = (d_1, \dots, d_n) \in \mathbb N^n$ return a uniform sample from the ensemble of all simple graphs with matching degrees.
	In practice, the problem is typically solved using Markov Chain Monte Carlo approaches, such as Edge-Switching or Curveball, even if no practical useful rigorous bounds are known on their mixing times.
	In contrast, Arman\etal sketch \ipwl, a novel and much more involved algorithm capable of generating graphs for power-law bounded degree sequences with $\gamma \gtrapprox 2.88$ in expected linear time.

	For the first time, we give a complete description of the algorithm and add novel switchings.
	To the best of our knowledge, our open-source implementation of \ipwl is the first practical generator with rigorous uniformity guarantees for the aforementioned degree sequences.
	In an empirical investigation, we find that for small average-degrees \ipwl is very efficient and generates graphs with one million nodes in less than a second.
	For larger average-degrees, parallelism can partially mitigate the increased running-time.
\end{abstract}

\setcounter{page}{1}

\section{Introduction}
\label{sec:typesetting-summary}
A common problem in network science is the sampling of graphs matching prescribed degrees.
It is tightly related to the random perturbation of graphs while keeping their degrees.
Among other things, the problem appears as a building block in network models (\eg~\cite{Lancichinetti2009}).
It also yields null models used to estimate the statistical significance of observations (\eg~\cite{Milo824, gotelli1996null}).

The computational cost and algorithmic complexity of solving this problem heavily depend on the exact requirements.
Two relaxed variants with linear work sampling algorithms are Chung-Lu graphs~\cite{chung2002connected} and the configuration model~\cite{DBLP:journals/jct/BenderC78,DBLP:books/ox/Newman10,bollobas1985random}.
The Chung-Lu model produces the prescribed degree sequence only in expectation and allows for simple and efficient generators~\cite{DBLP:conf/waw/MillerH11,DBLP:conf/sc/AlamKVM16,MorenoPN18,DBLP:journals/jpdc/FunkeLMPSSSL19}.
The configuration model (see \autoref{sec:algorithm}), on the other hand, exactly matches the prescribed degree-sequence but allows loops and multi-edges, which introduce non-uniformity into the distribution~\cite[p.436]{DBLP:books/ox/Newman10} and are inappropriate for certain applications; however, erasing them may lead to significant changes in topology~\cite{DBLP:journals/snam/SchlauchHZ15,DBLP:journals/compnet/VigerL16}.

In this article, we focus on simple graphs (\ie without loops or multi-edges) matching a prescribed degree sequence exactly.

\subsection{Related Work}
An early uniform sampler with unknown algorithmic complexity was given by Tinhofer \cite{tinhofer1979generation}.
Perhaps the first practically relevant algorithm was implicitly given by graph enumeration methods (e.g.~\cite{bekessy1972asymptotic, DBLP:journals/jct/BenderC78, DBLP:journals/ejc/Bollobas80}) using the configuration model with rejection-sampling.
While its time complexity is linear in the number of nodes, it is exponential in the maximum degree squared and therefore already impractical for relatively small degrees.

McKay and Wormald \cite{DBLP:journals/jal/McKayW90} increased the permissible degrees.
Instead of repeatedly rejecting non-simple graphs, their algorithm may remove multi-edges using switching operations.
For $d$-regular graphs with $d = \Oh{n^{1/3}}$, its expected time complexity is $\Oh{d^3n}$ where $n$ is the number of nodes;
later, Gao and Wormald \cite{DBLP:conf/focs/GaoW15} improved the result to $d = o(\sqrt{n})$ with the same time complexity, and also considered sparse non-regular cases (e.g.~power-law degree sequences)~\cite{DBLP:conf/soda/GaoW18}.
Subsequently, Arman\etal~\cite{DBLP:conf/focs/ArmanGW19} present\footnote{Implementations of \igen and \ireg are available at\\ \url{https://users.monash.edu.au/~nwormald/fastgen_v3.zip}} the algorithms \igen, \ipwl and \ireg based on \emph{incremental relaxation}.
\igen runs in expected linear time provided $\Delta^4 = \Oh{m}$ where $\Delta$ is the maximum degree and $m$ is the number of edges.
\ireg reduces the expected time complexity to $\Oh{nd + d^4}$ if $d = o(\sqrt{n})$ for the regular case and \ipwl takes expected linear time for power-law degree sequences.

In the relaxed setting, where the generated graph is approximately uniform, Jerrum and Sinclair \cite{DBLP:journals/tcs/JerrumS90} gave an algorithm using Markov Chain Monte Carlo (MCMC) methods.
Since then, further MCMC-based algorithms have been proposed and analyzed (e.g.~\cite{DBLP:journals/cpc/CooperDG07, DBLP:conf/alenex/GkantsidisMMZ03, DBLP:conf/soda/Greenhill15, DBLP:journals/rsa/KannanTV99, DBLP:conf/sigcomm/MahadevanKFV06, DBLP:conf/alenex/StantonP11, strona2014fast, verhelst2008efficient, DBLP:journals/compnet/VigerL16}).
While these algorithms allow for larger families of degree sequences, topological restrictions (\eg connected graphs~\cite{DBLP:conf/alenex/GkantsidisMMZ03, DBLP:journals/compnet/VigerL16}), or more general characterizations (\eg joint degrees~\cite{DBLP:conf/alenex/StantonP11, DBLP:conf/sigcomm/MahadevanKFV06}), % <- order by choice!
 theoretically proven upper bounds on their mixing times are either impractical or non-existent.
Despite this, some of these algorithms found wide use in several practical applications and have been implemented in freely available software libraries \cite{Lancichinetti2009, DBLP:journals/netsci/StaudtSM16, DBLP:journals/compnet/VigerL16} and adapted for advanced models of computation \cite{DBLP:conf/icpp/BhuiyanCKM14, DBLP:journals/jea/HamannMPTW18, DBLP:conf/esa/CarstensH0PTW18}.

As generally fast alternatives, asymptotic approximate samplers (e.g.~\cite{DBLP:conf/soda/GaoW18, DBLP:journals/algorithmica/BayatiKS10, DBLP:journals/combinatorica/KimV06, DBLP:journals/cpc/StegerW99, DBLP:journals/corr/Zhao13b}) have been proposed.
These samplers provide a weaker approximation than MCMC: the error tends to $0$ as $n$ grows but cannot be improved for any particular $n$.

\subsection{Our contribution}
Arman\etal~\cite{DBLP:conf/focs/ArmanGW19} introduce incremental relaxation and, as a corollary, obtain \ipwl by applying the technique to  the \pld algorithm \cite{DBLP:conf/soda/GaoW18}.
Crucial details of \ipwl were left open and are discussed here for the first time.
For the parts of the algorithm that use incremental relaxation (see \autoref{sec:algorithm}), we determine the order in which the relevant graph substructures should be relaxed, how to count the number of those substructures in a graph and find new lower bounds on the number of substructures, or adjust the ones used in \pld (see \autoref{sec:new-switchings}).

Our investigation also identified two cases where incremental relaxation compromised \ipwl's linear running-time as it implied too frequent restarts.
We solved this issue in consultation with the authors of \cite{DBLP:conf/focs/ArmanGW19} by adding new switchings to Phase 4 ($t_a$-, $t_b$-, and $t_c$-switchings, see \autoref{subsec:phase4}) and Phase 5 (switchings where $\max(m_1, m_2, m_3) = 2$, see \autoref{subsec:phase5}).

We engineer and optimize an \ipwl implementation and discuss practical parallelization possibilities.
In an empirical evaluation, we study our implementation's performance, provide evidence of its linear running-time, and compare the running-time with implementations of the popular approximately uniform \textsc{Edge-Switching} algorithm.

\subsection{Preliminaries and notation}
\label{subsec:notation}
For consistency, we use notation in accordance with prior descriptions of \pld and \ipwl.
A graph $G = (V, E)$ has $n$ nodes $V=\{1, \ldots, n\}$ and $|E|$ edges.
An edge connecting node $i$ to itself is called a \emph{loop} at $i$.
Let $m_{i,j}$ denote the multiplicity of edge $e = \{i, j\}$ (often abbreviated as $i j$);
for $m_{i,j} = 0, 1, 2, 3$ we refer to $e$ as a \emph{non-edge}, \emph{single-edge}, \emph{double-edge}, \emph{triple-edge}, respectively, and for $m_{i,j} > 1$ as \emph{multi-edge} (analogously for loops).
An edge is called \emph{simple} if it is neither a multi-edge nor loop.
A graph is \emph{simple} if it only contains simple edges (\ie no multi-edges or loops).

Given a graph $G$, define the \emph{degree} $\deg(i) = 2 m_{i,i} + \sum_{j \in V / \{i\}} m_{i,j}$ as the number of edges incident to node $i \in V$.
Let $\degseq = (d_1, \dots, d_n) \in \mathbb N ^ n$ be a \emph{degree sequence} and denote $\mathcal G(\degseq)$ as the set of simple graphs on $n$ nodes with $\deg(i) = d_i$ for all $i \in V$.
The degree sequence $\degseq$ is \emph{graphical} if $\mathcal G(\degseq)$ is non-empty.
If not stated differently, $\degseq$ is non-increasing, \ie $d_1 \ge d_2 \ge \ldots \ge d_n$.
We say $\degseq$ is \emph{power-law distribution-bounded (plib)} with exponent $\gamma > 1$ if $\degseq$ is strictly positive and there exists a constant $K$ (independent of $n$) such that for all $i \ge 1$ there are at most $Kni^{1-\gamma}$ entries of value $i$ or larger~\cite{DBLP:conf/soda/GaoW18}.
Denote the $k$-th factorial moment as $[x]_k = \prod_{i=0}^{k-1} (x-i)$ and define $M_k = \sum\nolimits_{i=1}^n [d_i]_k$, $H_k = \sum\nolimits_{i=1}^h [d_i]_k$, and $L_k = M_k - H_k$, where $h$ is a parameter defined in \autoref{subsec:p12preconditions} (roughly speaking, $h$ is the number of nodes with high degrees).
\textsl{See also \autoref{sec:apx-table-defs} for a summary of notation.}

\section{Algorithm description}
\label{sec:algorithm}
\setlength{\belowcaptionskip}{-10pt}
\ipwl takes a degree sequence $\degseq = (d_1, \dots, d_n)$ as input and outputs a uniformly random simple graph $G \in \mathcal{G}(\degseq)$.
The expected running-time is $O(n)$ if $\degseq$ is a plib sequence with $\gamma > 21/10 + \sqrt{61}/10 \approx 2.88102$.

The algorithm starts by generating a random graph $G$ using the \emph{configuration model}~\cite{DBLP:journals/jct/BenderC78}.
To this end, let $G$ be a graph with $n$ nodes and no edges, and for each node $i \in V$ place $d_i$ marbles labeled $i$ into an urn.
We then draw two random marbles without replacement, connect the nodes indicated by their labels, and repeat until the urn is empty.
The resulting graph $G$ is uniformly distributed in the set $\mathcal{S}(\rr{G})$ where $\rr{G}$ is a vector specifying the multiplicities  of all edges between, or loops at heavy nodes (as defined below), as well as the total numbers of other single-loops, double-edges, and triple-edges.
In particular, if $G$ is simple, then it is uniformly distributed in $\mathcal{G}(\degseq)$.
Moreover, if $\degseq$ implies $M_2 < M_1$, the degrees are rather small, and with constant probability $G$ is a simple graph~\cite{janson2009probability}.
Hence, \emph{rejection sampling} is efficient; the algorithm returns $G$ if it is simple and restarts otherwise.

For $M_2 \ge M_1$, the algorithm goes through five \emph{phases}.
In each phase, all non-simple edges of one kind, \eg all single-loops, or all double-edges, are removed from the graph by using \emph{switchings}.
A switching replaces some edges in the graph with other edges while preserving the degrees of all nodes.
Phases 1 and 2 remove multi-edges and loops with high multiplicity on the highest-degree nodes.
In Phases 3, 4 and 5, the remaining single-loops, triple-edges and double-edges are removed.
To guarantee the uniformity of the output and the linear running-time, the algorithm may restart in some steps.
A restart always resets the algorithm back to the first step of generating the initial graph.

Note that the same kind of switching can have different effects depending on which edges are selected for participation in the switching.
In general, we only allow the algorithm to perform switchings that have the intended effect.
Usually, a switching should remove exactly one non-simple edge without creating or removing other non-simple edges.
A switching that has the intended effect is called \emph{valid}.

Uniformity of the output is guaranteed by ensuring that the expected number of times a graph $G$ in $\mathcal{S}(\rr{G})$ is produced in the algorithm depends only on $\rr{G}$.
This requires some attention since, in general, the number of switchings we can perform on a graph and the number of switchings that produce a graph can vary between graphs in the same set (\ie some graphs are more likely reached than others).
To remedy this, there are \emph{rejection steps}, which restart the algorithm with a certain probability.
\emph{Before} a switching is performed on a graph $G$, the algorithm accepts with a probability proportional to the number of valid switchings that can be performed on $G$, and \emph{forward-rejects (f-rejects)} otherwise.
We do this by selecting an uniform random switching on $G$, and accepting if it is valid, or rejecting otherwise.
Then, \emph{after} a switching produced a graph $G'$, the algorithm accepts with a probability inversely proportional to the number of valid switchings that can produce $G'$, and \emph{backward-rejects (b-rejects)} otherwise.
This is done by computing a quantity $b(G')$ that is proportional to the number of valid switchings that can produce $G'$, and a lower bound $\underline{b}(G')$ on $b(G')$ over all $G'$ in the same set, and then accepting with probability $\underline{b}(G')/b(G')$.

\subsection{Phase 1 and 2 preconditions}
\label{subsec:p12preconditions}
In Phases 1 and 2, the algorithm removes non-simple edges with high multiplicity between the highest-degree nodes.
To this end, define a parameter $h = n^{1 - \delta(\gamma - 1)}$ where $\delta$ is chosen so that $1/(2 \gamma - 3) < \delta < (2 - 3/(\gamma - 1))/(4 - \gamma)$ (e.g. $\delta \approx 0.362$ for $\gamma \approx 2.88103$).
The $h$ highest-degree nodes are then called \emph{heavy}, and the remaining nodes are called \emph{light}.
An edge is called \emph{heavy} if its incident nodes are heavy, and \emph{light} otherwise.
A \emph{heavy multi-edge} is a multi-edge between heavy nodes, and a \emph{heavy loop} is a loop at a heavy node.

Now, let $W_i$ denote the sum of the multiplicities of all heavy multi-edges incident with $i$, and let $W_{i,j} = W_i + 2 m_{i,i} - m_{i,j}$.
Finally, let $\eta = \sqrt{M_2^2 H_1 / M_1^3}$. 
There are four preconditions for Phase 1 and 2: (1) for all nodes $i \neq j$ connected by a heavy multi-edge, we have $m_{i, j} W_{i, j} \leq \eta d_i$ and $m_{i, j} W_{j, i} \leq \eta d_j$, (2) for all nodes $i$ that have a heavy loop, we have $m_{i,i} W_i \leq \eta d_i$, (3) the sum of the multiplicities of all heavy multi-edges is at most $4 M_2^2 / M_1^2$, and (4) the sum of the multiplicities of all heavy loops is at most $4 M_2 / M_1$.

If any of the preconditions is not met, the algorithm restarts, otherwise it enters Phase 1.

\subsection{Phase 1: removal of heavy multi-edges}
\label{subsec:phase1}
A heavy multi-edge $i j$ with multiplicity $m = m_{i,j}$ is removed with the \emph{heavy-$m$-way switching} shown in \autoref{fig:heavymwayswitching}.
Note that the switching is defined on \emph{pairs} instead of edges.
An edge $i j$ of multiplicity $m$ is treated as $m$ distinct pairs $(i, j)$.
Adding a pair $(i, j)$ increases the multiplicity $m$, and similarly, removing $(i, j)$ decreases $m$.
The heavy-$m$-way switching switching removes the $m$ pairs $(i, j)$ and $m$ additional pairs $(v_k, v_{k+1}), 1 \leq k \leq m$, and replaces them with $2m$ new pairs between $i$ and $v_k$, and $j$ and $v_{k+1}$.

\begin{figure}[t]
 \begin{subfigure}{.45\textwidth}
 	\resizebox{\textwidth}{!}{
 		\def\node#1{}
\newcommand{\drawVerticesWithXOffsetPrefix}[2]{
	\def\y{1.5em};
	\def\x{2em}
	\def\px{.5em}
	% i
	\node[point] (#2i1) at (#1em,       \y)        {};
	\node[point] (#2i2) at (#1em + \px, \y - .5em) {};
	\node[point] (#2i3) at (#1em - \px, \y + .5em) {};
	\node[heavyvertex, label=above:{$i$}] (#2i) at (#1em, \y) {};
	% j
	\node[point] (#2j1) at (#1em,       -\y)        {};
	\node[point] (#2j2) at (#1em + \px, -\y + .5em) {};
	\node[point] (#2j3) at (#1em - \px, -\y - .5em) {};
	\node[heavyvertex, label=below:{$j$}] (#2j) at (#1em, -\y) {};
	% v1v2
	\def\my{1.75}
	\node[point] (#2v1) at (#1em + \x,  \my*\y) {};
	\node[point] (#2v2) at (#1em + \x, -\my*\y) {};
	\node[nonheavyvertex, label=above:{$v_1$}] at (#1em + \x,  \my*\y) {};
	\node[nonheavyvertex, label=below:{$v_2$}] at (#1em + \x, -\my*\y) {};
	% v3v4
	\node[point] (#2v3) at (#1em + 2*\x,  \my*\y) {};
	\node[point] (#2v4) at (#1em + 2*\x, -\my*\y) {};
	\node[nonheavyvertex, label=above:{$v_3$}] at (#1em + 2*\x,  \my*\y) {};
	\node[nonheavyvertex, label=below:{$v_4$}] at (#1em + 2*\x, -\my*\y) {};
	% v5v6
	\node[point] (#2v5) at (#1em + 3*\x,  \my*\y) {};
	\node[point] (#2v6) at (#1em + 3*\x, -\my*\y) {};
	\node[nonheavyvertex, label=above:{$v_5$}] at (#1em + 3*\x,  \my*\y) {};
	\node[nonheavyvertex, label=below:{$v_6$}] at (#1em + 3*\x, -\my*\y) {};
}
\begin{tikzpicture}[
	point/.style={
		draw,
		circle,
		inner sep=0,
		fill,
		minimum width=0.35em,
		minimum height=0.35em
	},
	heavyvertex/.style={
		draw, 
		circle,
		inner sep=0.75em, 
		thick
	},
	nonheavyvertex/.style={
		draw, 
		circle,
		inner sep=0.3em, 
		thick
	},
	edge/.style={
		draw, 
		black, 
		solid,
		thick
	}]
	
	\drawVerticesWithXOffsetPrefix{-3}{};
	\drawVerticesWithXOffsetPrefix{9}{c};
	
	\path[->, thick, draw] (-3em + 4*\x,  0.5em) -- (-3em + 5*\x,  0.5em) {};
	\path[->, thick, draw] (-3em + 5*\x, -0.5em) -- (-3em + 4*\x, -0.5em) node [midway, label=below:{\small inverse}] {};
	
	\path[edge] (i1) -- (j1);
	\path[edge] (i2) -- (j2);
	\path[edge] (i3) -- (j3);
	\path[edge] (v1) -- (v2);
	\path[edge] (v3) -- (v4);
	\path[edge] (v5) -- (v6);
	\draw [decorate, decoration={brace, amplitude=.3em,mirror,raise=2em}, yshift=-3em] ($(v5) + (.5em,0)$) -- ($(v1) + (-.5em,0)$) node [midway, yshift=2em, label=above:{\small $3$ pairs}] {};
	\draw [decorate, decoration={brace, amplitude=.3em,mirror,raise=2em}, yshift=-3em] ($(i) + (1em,.5em)$) -- ($(i) + (-1em,.5em)$) node [midway, yshift=2em, label=above:{\small $m{=}3$}] {};
	
	\path[edge] (ci3) -- (cv1);
	\path[edge] (cj3) -- (cv2);
	\path[edge] (ci1) -- (cv3);
	\path[edge] (cj1) -- (cv4);
	\path[edge] (ci2) -- (cv5);
	\path[edge] (cj2) -- (cv6);
\end{tikzpicture}
	}
 	\caption{\centering A heavy-$m$-way switching where $m = 3$.}
 	\label{fig:heavymwayswitching}
 \end{subfigure}
\hfill
 \begin{subfigure}{.45\textwidth}
 	\resizebox{\textwidth}{!}{
 		\def\node#1{}
\newcommand{\drawVerticesWithXOffsetPrefix}[2]{
	\def\y{1.5em};
	\def\x{1.35em}
	\def\px{.33em}
	% i
	\node[point] (#2i1) at (#1em - \px, -\px) {};
	\node[point] (#2i2) at (#1em + \px, -\px) {};
	\node[point] (#2i3) at (#1em - \px, +\px) {};
	\node[point] (#2i4) at (#1em + \px, +\px) {};
	\node[heavyvertex, label=above:{$i$}] (#2i) at (#1em, 0) {};
	% v1v2
	\def\my{1.25}
	\node[point] (#2v1) at (#1em + 2*\x,  \my*\y) {};
	\node[point] (#2v2) at (#1em + 2*\x, -\my*\y) {};
	\node[nonheavyvertex, label=above:{$v_1$}] at (#1em + 2*\x,  \my*\y) {};
	\node[nonheavyvertex, label=below:{$v_2$}] at (#1em + 2*\x, -\my*\y) {};
	% v3v4
	\node[point] (#2v3) at (#1em + 4*\x,  \my*\y) {};
	\node[point] (#2v4) at (#1em + 4*\x, -\my*\y) {};
	\node[nonheavyvertex, label=above:{$v_3$}] at (#1em + 4*\x,  \my*\y) {};
	\node[nonheavyvertex, label=below:{$v_4$}] at (#1em + 4*\x, -\my*\y) {};
}
\begin{tikzpicture}[
	point/.style={
		draw,
		circle,
		inner sep=0,
		fill,
		minimum width=0.35em,
		minimum height=0.35em
	},
	heavyvertex/.style={
		draw, 
		circle,
		inner sep=0.75em, 
		thick
	},
	nonheavyvertex/.style={
		draw, 
		circle,
		inner sep=0.4em, 
		thick
	},
	edge/.style={
		draw, 
		black, 
		solid,
		thick
	}]
	
	\drawVerticesWithXOffsetPrefix{-3.5}{};
	\drawVerticesWithXOffsetPrefix{9}{c};
	
	\path[->, thick, draw] (5em - 1em,  0.5em) -- (5em + 1em,  0.5em) {};
	\path[->, thick, draw] (5em + 1em, -0.5em) -- (5em - 1em, -0.5em) node [midway,label=below:{\small inverse}] {};
	
	\draw[edge] (i3) to [out=135, in=-135, looseness=5] (i1);
	\draw[edge] (i4) to [out= 45, in= -45, looseness=5] (i2);
	\path[edge] (v1) -- (v2);
	\path[edge] (v3) -- (v4);
	\draw [decorate, decoration={brace, amplitude=.3em,mirror,raise=2em}, yshift=-3em] ($(v3) + (.5em,0)$) -- ($(v1) + (-.5em,0)$) node [midway, yshift=2em, label=above:{\small $2$ pairs}] {};
	\draw [decorate, decoration={brace, amplitude=.3em,mirror,raise=2em}, yshift=-3em] ($(i) + (1.5em,.5em)$) -- ($(i) + (-1.5em,.5em)$) node [midway, yshift=2em, label=above:{\small $m{=}2$}] {};
	
	\path[edge] (ci3) -- (cv1);
	\path[edge] (ci1) -- (cv2);
	\path[edge] (ci4) -- (cv3);
	\path[edge] (ci2) -- (cv4);
\end{tikzpicture}
 	}
 	\caption{\centering A heavy-$m$-way loop switching where $m = 2$.}
 	\label{fig:heavymwayloopswitching}
 \end{subfigure}
 \vspace{1em}
 \caption{\centering Switchings used in Phases 1 and 2.}
\end{figure}

In Phase 1, we iterate over all heavy multi-edges $ij$ and each time execute:

\begin{enumerate}
\item Pick a uniform random heavy-$m$-way switching $S = (G, G')$ at nodes $i$ and $j$ as follows:
for all $1 \leq k \leq m$, sample a uniform random pair $(v_k, v_{k+1})$ in random orientation.
Then remove the pairs $(i, j)$ and $(v_k, v_{k+1})$, and add $(i, v_k)$ and $(j, v_{k+1})$.
The graph that results after all pairs have been switched is $G'$.
 \item Restart the algorithm (f-reject) if $S$ is not valid.
The switching is valid if for all $1 \leq k \leq m$: (a) $v_k$ and $v_{k+1}$ are distinct from $i$ and $j$, (b) if $v_k$ is heavy, it is not already connected to $i$, and if $v_{k+1}$ is heavy, it is not connected to $j$, and (c) at least one of $v_k$ and $v_{k+1}$ is light.
(This ensures that only the heavy multi-edge $i j$ is removed, and no other heavy multi-edges or loops are added or removed.)
 \item Restart the algorithm (b-reject) with probability $1 - \underline{b}_{hm}(G', i, j, m) / b_{hm}(G', i, j, m)$.
 \item With probability $1 / (1 + \overline{b}_{hm}(G', i, j, 1) / \underline{f}_{hm}(G', i, j, 1))$, set $G \gets G'$ and continue to the next iteration.
Otherwise, re-add $i j$ as a single-edge with the following steps:
 \begin{enumerate}[nosep]
  \item Pick a uniform random inverse heavy-$1$-way switching $S' = (G', G'')$ at nodes $i, j$ as follows:
	pick one simple neighbor $v_1$ of $i$ (\ie edge $v_1i$ is simple) uniformly at random, and analogously $v_2$ for $j$.
	Then remove the pairs $(i, v_1)$, $(j, v_2)$, and add $(i, j)$, $(v_1, v_2)$.
  \item Restart the algorithm (f-reject) if $S'$ is not valid. The switching is valid unless both $v_1$ and $v_2$ are heavy.
  \item Restart the algorithm (b-reject) with probability $1 - \underline{f}_{hm}(G'', i, j, 1) / f_{hm}(G'', i, j, 1)$.
  \item Set $G \gets G''$.
 \end{enumerate}
\end{enumerate}

\noindent
To compute the b-rejection probability for step~3, let $Y_1$ and $Y_2$ be the number of heavy nodes that are neighbors of $i$ and $j$, respectively, in the graph $G'$.
Then, set:
\begin{align}
 \underline{b}_{hm}(G', i, j, m) &= [d_i {-} W_{i,j}]_m [d_j {-} W_{j,i}]_m - m h^2 [d_i {-} W_{i,j}]_{m-1} [d_j {-} W_{j,i}]_{m-1}
 \\
 b_{hm}(G', i, j, m) &= \sum_{l=0}^m \bigg[  (-1)^{l} \binom{m}{l} [Y_1]_l [Y_2]_l [d_i {-} W_{i,j} {-} l]_{m - l} [d_j {-} W_{j,i} {-} l]_{m - l} \bigg]
\end{align}

For step $4$:
\begin{align}
 \overline{b}_{hm}(G', i, j, 1) &= (d_i - W_{i,j}) (d_j - W_{j,i})
 \\
 \underline{f}_{hm}(G', i, j, 1) &= M_1 - 2 H_1
\end{align}

For step $4c$: let $Z_1$ be the number of ordered pairs between light nodes in the graph $G''$, let $Z_2$ be the number of pairs between one light and one heavy node, where the heavy node is not adjacent to $i$, and let $Z_3$ be the analogous number for $j$.
Then set:
\begin{align}
  f_{hm}(G'', i, j, 1) &= Z_1 + Z_2 + Z_3 
  \\
  \underline{f}_{hm}(G'', i, j, 1) &= M_1 - 2 H_1
\end{align}

Phase 1 ends if all heavy multi-edges are removed. Then \ipwl enters Phase 2.

\subsection{Phase 2: removal of heavy loops}
\label{subsec:phase2}
Phase 2 removes all heavy loops using the \emph{heavy-m-way loop switching} shown in \autoref{fig:heavymwayloopswitching}.
The algorithm iterates over all heavy nodes $i$ that have a heavy loop, and for each performs the following steps:

\begin{enumerate}
 \item Pick a uniform random heavy-$m$-way loop switching $S = (G, G')$ at node $i$ (cf. Phase 1).
 \item Restart (f-reject) if $S$ is not valid.
 The switching is valid if for all $1 \leq k \leq m$: 
 a) $v_k \ne i$ and $v_{k+1} \ne i$,
 b) $iv_k$ and $iv_{k+1}$ are non-edges or light, c) at least one of $v_k$ and $v_{k+1}$ is light.
 \item Restart (b-reject) with probability $1 - \underline{b}_{hl}(G', i, m) / b_{hl}(G', i, m)$.
 \item Set $G \gets G'$.
\end{enumerate}

\noindent Let $Y$ be the number of heavy neighbors of $i$ in $G'$.
The quantities needed in step $3$ are:
\begin{align}
 \underline{b}_{hl}(G', i, m) &= [d_i]_{2m} - m h^2 [d_i]_{2m-2}
 \\
 b_{hl}(G', i, m) &= \sum_{l=0}^m (-1)^{l} \binom{m}{l} [Y]_{2l} [d_i - 2l]_{2m - 2l}
\end{align}

Phase 2 ends if all heavy loops are removed. We then check preconditions for the next phases.

\subsection{Phase 3, 4 and 5 preconditions}
\label{subsec:phase3prec}
After Phases 1 and 2, the only remaining non-simple edges in the graph $G$ are all incident with at least one light node, \ie with one of the low-degree nodes.
With constant probability, the only remaining non-simple edges are single loops, double-edges, and triple-edges, and there are not too many of them \cite{DBLP:conf/soda/GaoW18}.
Otherwise, the algorithm restarts.
Let $m_l$ denote the number of single loops, $m_t$ the number of triple-edges, and $m_d$ the number of double-edges in the graph $G$.
Then, the preconditions are: (1) $m_l \leq 4 L_2 / M_1$, (2) $m_t \leq 2 L_3 M_3 / M_1^3$, (3) $m_d \leq 4 L_2 M_2 / M_1^2$, and (4) there are no loops or multi-edges of higher multiplicity.

If all preconditions are met, the algorithm enters Phase 3 to remove all remaining loops.

\subsection{Phase 3: removal of light loops}
\label{subsec:phase3}
\begin{figure}[t]
	\centering
	\resizebox{0.49\textwidth}{!}{
		\def\node#1{}
\newcommand{\drawVerticesWithXOffsetPrefix}[2]{
	\def\y{1.5em};
	\def\x{1.35em}
	\def\px{.33em}
	% i
	\node[point] (#2i1) at (#1em, +\px) {};
	\node[point] (#2i2) at (#1em, -\px) {};
	\node[2pointvertex, label=above:{$v_1$}] (#2i) at (#1em, 0) {};
	% v1v2
	\def\my{1}
	\node[point] (#2v1) at (#1em + 2*\x,  \my*\y) {};
	\node[point] (#2v2) at (#1em + 2*\x, -\my*\y) {};
	\node[2pointvertex, label=above:{$v_2$}] at (#1em + 2*\x,  \my*\y) {};
	\node[2pointvertex, label=below:{$v_3$}] at (#1em + 2*\x, -\my*\y) {};
	% v3v4
	\node[point] (#2v3) at (#1em + 4*\x,  \my*\y) {};
	\node[point] (#2v4) at (#1em + 4*\x, -\my*\y) {};
	\node[2pointvertex, label=above:{$v_4$}] at (#1em + 4*\x,  \my*\y) {};
	\node[2pointvertex, label=below:{$v_5$}] at (#1em + 4*\x, -\my*\y) {};
}
\begin{tikzpicture}[
	point/.style={
		draw,
		circle,
		inner sep=0,
		fill,
		minimum width=0.35em,
		minimum height=0.35em
	},
	heavyvertex/.style={
		draw, 
		circle,
		inner sep=0.75em, 
		thick
	},
	nonheavyvertex/.style={
		draw, 
		circle,
		inner sep=0.4em, 
		thick
	},
	2pointvertex/.style={
		draw,
		circle,
		inner sep=0.55em,
		thick
	},
	edge/.style={
		draw, 
		black, 
		solid,
		thick
	}]
	
	\drawVerticesWithXOffsetPrefix{-3.5}{};
	\drawVerticesWithXOffsetPrefix{9}{c};
	
	\path[->, thick, draw] (5em - 1em,  0.5em) -- (5em + 1em,  0.5em) {};
	\path[->, thick, draw] (5em + 1em, -0.5em) -- (5em - 1em, -0.5em) node [midway,label=below:{\small inverse}] {};
	
	\draw[edge] (i1) to [out=45, in=-45, looseness=8] (i2);
	\path[edge] (v1) -- (v3);
	\path[edge] (v2) -- (v4);
	
	\path[edge] (ci1) -- (cv1);
	\path[edge] (ci2) -- (cv2);
	\path[edge] (cv3) -- (cv4);
\end{tikzpicture}
	}
	\caption{\centering The $l$-switching used in Phase 3.}
	\label{fig:lswitching}
\end{figure}

Phase 3 removes all light loops, \ie loops at lower degree nodes, with the \emph{l-switching} depicted in \autoref{fig:lswitching}.
We repeat the following steps until all loops are removed:

\begin{enumerate}
  \item Pick a uniform random $l$-switching $S = (G, G')$ as follows.
Sample a uniform random loop on some node $v_1$ in $G$.
Then, sample two uniform random pairs $(v_2, v_4)$ and $(v_3, v_5)$ in random orientation.
Replace $(v_1, v_1)$, $(v_2, v_4)$, $(v_3, v_5)$ with $(v_1, v_2)$, $(v_1, v_3)$, $(v_4, v_5)$.
 \item Restart (f-reject) if $S$ is not valid.
The switching is valid if it removes the targeted loop without adding or removing other multi-edges or loops.
 \item Restart (b-reject) with probability $1 - \underline{b}_l(G';0) \underline{b}_l(G';1) / (b_l(G', \emptyset) b_l(G', v_1 v_2 v_3))$.
 \item Set $G \gets G'$.
\end{enumerate}

\noindent
To accelerate the computation of the b-rejection probabilities, \emph{incremental relaxation}~\cite{DBLP:conf/focs/ArmanGW19} is used.
Let $v_1 v_2 v_3$ denote a \emph{two-star} centered at $v_1$, i.e. three nodes $v_1, v_2, v_3$ where $v_1 v_2$ and $v_1 v_3$ are edges.
We call a two-star $v_1 v_2 v_3$ \emph{simple}, if both edges  are simple, and we call the star \emph{light}, if the center $v_1$ is a light node.
Finally, we speak of \emph{ordered} two-stars if each permutation of the labels for the outer nodes $v_2$ and $v_3$ implies a distinct two-star.
Then, the $l$-switching creates a light simple two-star $v_1 v_2 v_3$ and a simple pair $v_4 v_5$.

With incremental relaxation, the b-rejection is split up into two sub-rejections, one for each structure created by the switching.
First, set $b_l(G', \emptyset)$ to the number of light simple ordered two-stars in $G'$.
Then, initialize $b_l(G', v_1 v_2 v_3)$ to the number of simple ordered pairs in $G'$.
Now, subtract all the simple ordered pairs that are incompatible with the two-star $v_1 v_2 v_3$ created by the switching.
The incompatible pairs a) share nodes with the two-star $v_1 v_2 v_3$ or b) have edges $v_2 v_4$ or $v_3 v_5$.
Let $A_2 = \sum\nolimits_{i=1}^{d_1} d_i$.
Then, we use the following lower bounds on these quantities:
\begin{align}
 \underline{b}_l(G';0) &= L_2 - 12 m_t d_h - 8 m_d d_h - m_l d_h^2
 \\
 \underline{b}_l(G';1) &= M_1 - 6 m_t - 4 m_d - 2 m_l - 2 A_2 - 4 d_1 - 2 d_h
\end{align}

Next, the algorithm removes the triple-edges in Phase~4.

\subsection{Phase 4: removal of light triple-edges}
\label{subsec:phase4}
In Phase 4, the algorithm uses multiple different switchings.
Similarly to the previous phases, there is one switching that removes the multi-edges.
The other switchings, called \emph{boosters}, lower the probability of a b-rejection.
In total, there are four different switchings. The $t$-switching removes a triple-edge (see \autoref{fig:tswitching}).
The $t_a$-, $t_b$- and $t_c$-switchings create structures consisting of a simple three-star $v_1 v_3 v_5 v_7$, and a light simple three star $v_2 v_4 v_6 v_8$, that do not share any nodes.
We call these structures \emph{triplets}.
Note that the t-switching creates a triplet where none of the edges $v_1 v_2$, $v_3 v_4$, $v_5 v_6$ or $v_7 v_8$ are allowed to exist.
The $t_a$-switching creates the triplets where either one of the edges $v_3 v_4$, $v_5 v_6$ or $v_7 v_8$ exist.
The $t_b$-switching (see \autoref{fig:tbswitching}) creates the triplets where two of those edges exist.
The $t_c$-switching creates the triplet where all three of those edges exist.

\begin{figure}[t]
 \begin{subfigure}{.4\textwidth}
 	\resizebox{\textwidth}{!}{
		\def\node#1{}
\newcommand{\drawVerticesWithXOffsetPrefix}[2]{
	\def\y{1.5em};
	\def\x{2em}
	\def\spx{.4em}
	\def\px{.5em}
	% v1
	\node[smallpoint] (#2i1) at (#1em - \spx, \y) {};
	\node[smallpoint] (#2i2) at (#1em,       \y) {};
	\node[smallpoint] (#2i3) at (#1em + \spx, \y) {};
	\node[2pointvertex, label=above:{$v_1$}] (#2i) at (#1em, \y) {};
	% v2
	\node[smallpoint] (#2j1) at (#1em - \spx, -\y) {};
	\node[smallpoint] (#2j2) at (#1em,       -\y) {};
	\node[smallpoint] (#2j3) at (#1em + \spx, -\y) {};
	\node[2pointvertex, label=below:{$v_2$}] (#2j) at (#1em, -\y) {};
	% v3v4
	\node[point] (#2v3) at (#1em + \x,  \y) {};
	\node[point] (#2v4) at (#1em + \x, -\y) {};
	\node[2pointvertex, label=above:{$v_3$}] at (#1em + \x,  \y) {};
	\node[2pointvertex, label=below:{$v_4$}] at (#1em + \x, -\y) {};
	% v5v6
	\coordinate (#2v5coords) at (#1em - 2*\x,  \y);
	\coordinate (#2v6coords) at (#1em - 2*\x, -\y);
	\node[point] (#2v5) at (#2v5coords) {};
	\node[point] (#2v6) at (#2v6coords) {};
	\node[2pointvertex, label=above:{$v_5$}] at (#2v5coords) {};
	\node[2pointvertex, label=below:{$v_6$}] at (#2v6coords) {};
	% v7v8
	\coordinate (#2v7coords) at (#1em - \x,  3*\y);
	\coordinate (#2v8coords) at (#1em - \x, -3*\y);
	\node[point] (#2v7) at (#2v7coords) {};
	\node[point] (#2v8) at (#2v8coords) {};
	\node[2pointvertex, label=above:{$v_7$}] at (#2v7coords) {};
	\node[2pointvertex, label=below:{$v_8$}] at (#2v8coords) {};
}
\begin{tikzpicture}[
	point/.style={
		draw,
		circle,
		inner sep=0,
		fill,
		minimum width=0.35em,
		minimum height=0.35em
	},
	smallpoint/.style={
		draw,
		circle,
		inner sep=0,
		fill,
		minimum width=0.28em,
		minimum height=0.28em,
	},
	heavyvertex/.style={
		draw, 
		circle,
		inner sep=0.75em, 
		thick
	},
	nonheavyvertex/.style={
		draw, 
		circle,
		inner sep=0.4em, 
		thick
	},
	2pointvertex/.style={
		draw,
		circle,
		inner sep=0.55em,
		thick
	},
	edge/.style={
		draw, 
		black, 
		solid,
		thick
	}]
	
	\drawVerticesWithXOffsetPrefix{-3}{};
	\drawVerticesWithXOffsetPrefix{13}{c};
	
	\path[->, thick, draw] (-1em + 2*\x,  0.5em) -- (-1em + 3*\x,  0.5em) {};
	\path[->, thick, draw] (-1em + 3*\x, -0.5em) -- (-1em + 2*\x, -0.5em) node [midway, label=below:{\small inverse}] {};
	
	\path[edge] (i1) -- (j1);
	\path[edge] (i2) -- (j2);
	\path[edge] (i3) -- (j3);
	\path[edge] (v3) -- (v4);
	\path[edge] (v5) -- (v6);
	\path[edge] (v7) -- (v8);
	
	\path[edge] (ci3) -- (cv3);
	\path[edge] (cj3) -- (cv4);
	\path[edge] (ci1) -- (cv5);
	\path[edge] (cj1) -- (cv6);
	\path[edge] (ci2) -- (cv7);
	\path[edge] (cj2) -- (cv8);
	
	% add phantom edges for better scaling
	\coordinate (cv9)  at (9em - 2.25*\x,  3.5*\y);
	\coordinate (cv10) at (9em - 2.25*\x, -3.5*\y););
	\coordinate (cv11) at (9em +  .75*\x,  3.5*\y);
	\coordinate (cv12) at (9em +  .75*\x, -3.5*\y););
	\node[label=above:{\phantom{$v_9$}}] (v9phantom) at (cv9) {};
	\draw[draw, white, solid, thick] (cv9)  to [out=75, in=105, looseness=1.05]   (cv11);
	\draw[draw, white, solid, thick] (cv10) to [out=-75, in=-105, looseness=1.05] (cv12);
\end{tikzpicture}
	}
	\caption{\centering The $t$-switching.}
	\label{fig:tswitching}
 \end{subfigure}
\hfill
 \begin{subfigure}{.55\textwidth}
 	\resizebox{\textwidth}{!}{
		\def\node#1{}
\newcommand{\drawVerticesWithXOffsetPrefix}[2]{
	\def\y{1.5em};
	\def\x{2em}
	\def\spx{.4em}
	\def\px{.5em}
	% v1 v7 v9 v11
	% v1
	\node[smallpoint] (#2v1p1) at (#1em - \spx, \y) {};
	\node[smallpoint] (#2v1p2) at (#1em,       \y) {};
	\node[smallpoint] (#2v1p3) at (#1em + \spx, \y) {};
	\node[2pointvertex, label=above:{$v_1$}] (#2v1) at (#1em, \y) {};
	% v7
	\coordinate (#2v7coords) at (#1em - \x,  3*\y);
	\node[point] (#2v7) at (#2v7coords) {};
	\node[2pointvertex, label=above:{$v_7$}] at (#2v7coords) {};
	% v9
	\coordinate (#2v9coords) at (#1em - 2.25*\x,  3.5*\y);
	\node[point] (#2v9) at (#2v9coords) {};
	\node[2pointvertex, label=above:{$v_9$}] at (#2v9coords) {};
	% v11
	\coordinate (#2v11coords) at (#1em + .75*\x,  3.5*\y);
	\node[point] (#2v11) at (#2v11coords) {};
	\node[2pointvertex] at (#2v11coords) {};
	
	% v2 v8 v10 v12
	% v2
	\node[smallpoint] (#2v2p1) at (#1em - \spx, -\y) {};
	\node[smallpoint] (#2v2p2) at (#1em,       -\y) {};
	\node[smallpoint] (#2v2p3) at (#1em + \spx, -\y) {};
	\node[2pointvertex, label=below:{$v_2$}] (#2v2) at (#1em, -\y) {};
	% v8
	\coordinate (#2v8coords) at (#1em - \x, -3*\y);
	\node[point] (#2v8) at (#2v8coords) {};
	\node[2pointvertex, label=below:{$v_8$}] at (#2v8coords) {};
	% v10
	\coordinate (#2v10coords) at (#1em - 2.25*\x,  -3.5*\y);
	\node[point] (#2v10) at (#2v10coords) {};
	\node[2pointvertex] at (#2v10coords) {};
	% v12
	\coordinate (#2v12coords) at (#1em + .75*\x,  -3.5*\y);
	\node[point] (#2v12) at (#2v12coords) {};
	\node[2pointvertex] at (#2v12coords) {};
	
	% v3 v13 v15
	% v3
	\node[point] (#2v3p1) at (#1em + \x - .5*\px,  \y) {};
	\node[point] (#2v3p2) at (#1em + \x + .5*\px,  \y) {};
	\node[2pointvertex, label=above:{$v_3$}] at (#1em + \x,  \y) {};
	% v13
	\coordinate (#2v13coords) at (#1em + 2.25*\x, 3*\y);
	\node[point] (#2v13) at (#2v13coords) {};
	\node[2pointvertex] at (#2v13coords) {};
	% v15
	\coordinate (#2v15coords) at (#1em + 2.75*\x, \y);
	\node[point] (#2v15) at (#2v15coords) {};
	\node[2pointvertex] at (#2v15coords) {};
	
	% v4 v14 v16
	\node[point] (#2v4p1) at (#1em + \x - .5*\px, -\y) {};
	\node[point] (#2v4p2) at (#1em + \x + .5*\px, -\y) {};
	\node[2pointvertex, label=below:{$v_4$}] at (#1em + \x, -\y) {};
	% v14
	\coordinate (#2v14coords) at (#1em + 2.25*\x, -3*\y);
	\node[point] (#2v14) at (#2v14coords) {};
	\node[2pointvertex] at (#2v14coords) {};
	% v65
	\coordinate (#2v16coords) at (#1em + 2.75*\x, -\y);
	\node[point] (#2v16) at (#2v16coords) {};
	\node[2pointvertex] at (#2v16coords) {};
	
	% v5 v17 v19
	% v5
	\coordinate (#2v5coords) at (#1em - 2*\x,  \y);
	\node[point] (#2v5p1) at (#1em - 2*\x - .5*\px, \y) {};
	\node[point] (#2v5p2) at (#1em - 2*\x + .5*\px, \y) {};
	\node[2pointvertex, label=above:{$v_5$}] at (#2v5coords) {};
	% v17
	\coordinate (#2v17coords) at (#1em - 3.5*\x, 3*\y);
	\node[point] (#2v17) at (#2v17coords) {};
	\node[2pointvertex] at (#2v17coords) {};
	% v19
	\coordinate (#2v19coords) at (#1em - 4*\x, \y);
	\node[point] (#2v19) at (#2v19coords) {};
	\node[2pointvertex] at (#2v19coords) {};
	
	% v6 v18 v20
	% v6
	\coordinate (#2v6coords) at (#1em - 2*\x, -\y);
	\node[point] (#2v6p1) at (#1em - 2*\x - .5*\px, -\y) {};
	\node[point] (#2v6p2) at (#1em - 2*\x + .5*\px, -\y) {};
	\node[2pointvertex, label=below:{$v_6$}] at (#2v6coords) {};
	% v18
	\coordinate (#2v18coords) at (#1em - 3.5*\x, -3*\y);
	\node[point] (#2v18) at (#2v18coords) {};
	\node[2pointvertex] at (#2v18coords) {};
	% v20
	\coordinate (#2v20coords) at (#1em - 4*\x, -\y);
	\node[point] (#2v20) at (#2v20coords) {};
	\node[2pointvertex] at (#2v20coords) {};
}
\begin{tikzpicture}[
	point/.style={
		draw,
		circle,
		inner sep=0,
		fill,
		minimum width=0.35em,
		minimum height=0.35em
	},
	smallpoint/.style={
		draw,
		circle,
		inner sep=0,
		fill,
		minimum width=0.28em,
		minimum height=0.28em,
	},
	2pointvertex/.style={
		draw,
		circle,
		inner sep=0.55em,
		thick
	},
	heavyvertex/.style={
		draw, 
		circle,
		inner sep=0.75em, 
		thick
	},
	nonheavyvertex/.style={
		draw, 
		circle,
		inner sep=0.4em, 
		thick
	},
	edge/.style={
		draw, 
		black, 
		solid,
		thick
	},
	mygray/.style={
		black!40
	},
	grayedge/.style={
		draw,
		mygray,
		solid,
		thick
	}]
	
	\drawVerticesWithXOffsetPrefix{-3}{};
	\drawVerticesWithXOffsetPrefix{17}{c};
	
	\path[->, thick, draw] (4.875em,  0.5em) -- (6.875em,  0.5em) {};
	\path[->, thick, draw] (6.875em, -0.5em) -- (4.875em, -0.5em) node [midway, label=below:{\small inverse}] {};
	
	% v1 v7 v9 v11
	\path[edge] (v1p1) -- (v9);
	\path[edge] (v1p2) -- (v7);
	\path[edge] (v1p3) -- (v11);
	% v2 v8 v10 v12
	\path[edge] (v2p1) -- (v10);
	\path[edge] (v2p2) -- (v8);
	\path[edge] (v2p3) -- (v12);
	% v3 v13 v15
	\path[edge] (v3p1) -- (v13);
	\path[edge] (v3p2) -- (v15);
	% v4 v14 v16
	\path[edge] (v4p1) -- (v14);
	\path[edge] (v4p2) -- (v16);
	% v5 v17 v19
	\path[edge] (v5p1) -- (v19);
	\path[edge] (v5p2) -- (v17);
	% v6 v18 v20
	\path[edge] (v6p1) -- (v20);
	\path[edge] (v6p2) -- (v18);
	
	\path[edge] (cv1p3) -- (cv3p1);
	\path[edge] (cv2p3) -- (cv4p1);
	\path[edge] (cv1p1) -- (cv5p2);
	\path[edge] (cv2p1) -- (cv6p2);
	\path[edge] (cv5p1) -- (cv6p1);
	\path[edge] (cv3p2) -- (cv4p2);
	\path[edge] (cv1p2) -- (cv7);
	\path[edge] (cv2p2) -- (cv8);
	
	\path[edge] (cv13) -- (cv15);
	\path[edge] (cv14) -- (cv16);
	\path[edge] (cv17) -- (cv19);
	\path[edge] (cv18) -- (cv20);
	\draw[edge] (cv9)  to [out=75, in=105, looseness=1.05]   (cv11);
	\draw[edge] (cv10) to [out=-75, in=-105, looseness=1.05] (cv12);
\end{tikzpicture}
 	}
	\caption{\centering The $t_b$-switching.}
	\label{fig:tbswitching}
 \end{subfigure}
 \vspace{1em}
\caption{\centering Switchings used in Phase 4.}
\end{figure}

Phase 4 removes all triple-edges.
In each iteration, the algorithm first chooses a switching type $\tau$ from $\{t, t_a, t_b, t_c\}$, where type $\tau$ has probability~$\rho_{\tau}$.
The sum of these probabilities can be less than one, and the algorithm restarts with the remaining probability.
Overall, we have the following steps (where the  constants $\rho_\tau$, defined below, ensure uniformity -- cf.~\cite{DBLP:conf/focs/GaoW15}):

\begin{enumerate}
  \item Choose switching type $\tau$ with probability $\rho_{\tau}$, or restart with probability $1 - \sum_{\tau} \rho_{\tau}$.
 \item Pick a uniform random $\tau$-switching $S = (G, G')$.
If $\tau = t$, sample a uniform random triple-edge and three uniform random pairs, and switch them as shown in \autoref{fig:tswitching}.
If $\tau \neq t$, sample some uniform random $k$-stars, as exemplified for $t_b$ in \autoref{fig:tbswitching}, and switch them into the intended triplet.
 \item Restart (f-reject) if $S$ is not valid.
 The switching is valid if, for $\tau = t$, it removes the targeted triple-edge, or if, for $\tau \neq t$, it creates the intended triplet, without adding or removing (other) multi-edges or loops.
 \item Restart (b-reject) with probability $1 - \underline{b}_t(G';0) \underline{b}_t(G';1) / (b_t(G', \emptyset) b_t(G', v_1 v_3 v_5 v_7))$.
 \item Restart (b-reject) with probability $1 - \underline{b}_{\tau}(G') / b_{\tau}(G')$.
 \item Set $G \gets G'$.
\end{enumerate}

\noindent
For the b-rejection, incremental relaxation is used. In step $4$, there are two sub-rejections for the triplet, and in step $5$, there are sub-rejections for any additional pairs created (e.g. pairs that are not part of the triplet).
The $t$-switching creates no additional pairs, the $t_a$-switching creates three, the $t_b$-switching shown in \autoref{fig:tbswitching} creates six, and the $t_c$-switching nine.

The probability for step $4$ is computed as follows: first, set $b_t(G', \emptyset)$ to the number of simple ordered three-stars in $G'$.
Then, set $b_t(G', v_1 v_3 v_5 v_7)$ to the number of light simple ordered three-stars that a) do not share any nodes with the three-star $v_1 v_3 v_5 v_7$ created by~$S$, b) have no edge $v_1 v_2$ and no multi-edges $v_3 v_4$, $v_5 v_6$, $v_7 v_8$.
Let $B_k = \sum_{i=1}^{d_1} [d_{h+i}]_k$.
Then, the lower bounds are:
\begin{align}
 \underline{b}_t(G';0) &= M_3 - 18 m_t d_1^2 - 12 m_d d_1^2
 \\
 \underline{b}_t(G';1) &= L_3 - 18 m_t d_h^2 - 12 m_d d_h^2 - B_3 - 3 (m_t + m_d) B_2 - d_h^3 - 9 B_2
\end{align}

For step $5$: let $k$ be the number of additional pairs created by the switching.
Then, for pair $1 \leq i \leq k$, set $b_{\tau}(G', \overline{V}_{i+1}(S))$ to the number of simple ordered pairs in $G'$, that a) do not share nodes with the triplet or the previous $i - 1$ pairs, and b) have no edges that should have been removed by the switching (\eg in \autoref{fig:tbswitching}, $v_1 v_9$ cannot be an edge).
Finally, set $b_{\tau}(G') = \prod_{i=1}^k b_{\tau}(G', \overline{V}_{i+1}(S))$. The lower bound is:
\begin{align}
 \underline{b}_{\tau}(G') &= \prod\nolimits_{i=1}^{k} \underline{b}_\tau(G';i+1)
 \\
 \underline{b}_\tau(G';i+1) &= M_1 - 6 m_t - 4 m_d - 16 d_1 - 4 (i - 1) d_1 - 2 A_2
\end{align}

The type probabilities as computed as follows.
When initializing Phase~4, set $\rho_t = 1 - \varepsilon$ where $\varepsilon = 28 M_2^2 / M_1^3$, and set $\rho_{\tau} = 0$ for $\tau \in \{t_a, t_b, t_c\}$.
In each subsequent iteration, the probabilities are only updated after a $t$-switching $S = (G, G')$ is performed.
Then, first, let $i$ be the number of triple-edges in the graph $G'$, and let $i_1$ be the initial number of triple-edges after first entering Phase 4.
Now, define a parameter $x_i$:
\begin{align}
 &x_i = x_{i + 1} \rho_t \frac{\underline{b}_t(G';0) \underline{b}_t(G';1)}{\overline{f}_t(i + 1)} + 1,
\end{align}
where $x_{i_1} = 1$ and $\overline{f}_t(i) = 12 i M_1^3$.

Define $\overline{f}_{t_a} = 3 M_3 L_3 M_2^2$, $\overline{f}_{t_b} = 3 M_3 L_3 M_2^4$, and $\overline{f}_{t_c} = M_3 L_3 M_2^6$.
Then, update the probability $\rho_{\tau}$ for $\tau \in \{t_a, t_b, t_c\}$ as follows:
\begin{align}
 &\rho_{\tau} = \frac{x_{i + 1}}{x_i} \rho_t \frac{\overline{f}_{\tau}}{\underline{b}_{\tau}(G') \overline{f}_t(i + 1)}
\end{align}
Finally, the algorithm enters Phase 5.

\subsection{Phase 5: removal of light double-edges}
\label{subsec:phase5}
Similar to Phase 4, Phase 5 uses multiple different switchings.
The $d$-switching (see \autoref{fig:typeIswitching}) removes double-edges. The booster switchings create so-called \emph{doublets} consisting of a simple two-star $v_1 v_3 v_5$ and a light simple two-star $v_2 v_4 v_6$ that do not share any nodes.
Let $m_1$, $m_2$, and $m_3$ denote the multiplicities of the edges $v_1 v_2$, $v_3 v_4$ and $v_5 v_6$ in a doublet, respectively.
Then, the $d$-switching creates the doublet with $\max(m_1, m_2, m_3) = 0$. The booster switchings create all other doublets where $\max(m_1, m_2, m_3) \leq 2$, \ie where some of the edges are single-edges or double-edges.
We identify each booster switching by the doublet created, \eg type~$\tau = (1,2,0)$ shown in \autoref{fig:typeVbswitching} creates a doublet where $m_1 = 1$, $m_2 = 2$, and $m_3 = 0$.

\begin{figure}[t]
 \begin{subfigure}{.45\textwidth}
 	\resizebox{\textwidth}{!}{
			\def\node#1{}
\newcommand{\drawVerticesWithXOffsetPrefix}[2]{
	\def\y{1.5em};
	\def\x{3em}
	\def\px{.5em}
	\def\ld{-.25em}
	\def\a{20}
	% v1
	\node[point] (#2i1) at (#1em - .5*\px, \y) {};
	\node[point] (#2i2) at (#1em + .5*\px, \y) {};
	\node[2pointvertex, label={[label distance=\ld]\a:$v_1$}] (#2i) at (#1em, \y) {};
	% v2
	\node[point] (#2j1) at (#1em - .5*\px, -\y) {};
	\node[point] (#2j2) at (#1em + .5*\px, -\y) {};
	\node[2pointvertex, label={[label distance=\ld]-\a:$v_2$}] (#2j) at (#1em, -\y) {};
	% v3v4
	\coordinate (#2v3coords) at (#1em - \x,  \y);
	\coordinate (#2v4coords) at (#1em - \x, -\y);
	\node[point] (#2v3) at (#2v3coords) {};
	\node[point] (#2v4) at (#2v4coords) {};
	\node[2pointvertex, label={[label distance=\ld]\a:$v_3$}]  at (#2v3coords) {};
	\node[2pointvertex, label={[label distance=\ld]-\a:$v_4$}] at (#2v4coords) {};
	% v5v6
	\coordinate (#2v5coords) at (#1em + \x,  \y);
	\coordinate (#2v6coords) at (#1em + \x, -\y);
	\node[point] (#2v5) at (#2v5coords) {};
	\node[point] (#2v6) at (#2v6coords) {};
	\node[2pointvertex, label={[label distance=\ld]\a:$v_5$}]  at (#2v5coords) {};
	\node[2pointvertex, label={[label distance=\ld]-\a:$v_6$}] at (#2v6coords) {};
}
\begin{tikzpicture}[
	point/.style={
		draw,
		circle,
		inner sep=0,
		fill,
		minimum width=0.35em,
		minimum height=0.35em
	},
	heavyvertex/.style={
		draw, 
		circle,
		inner sep=0.75em, 
		thick
	},
	nonheavyvertex/.style={
		draw, 
		circle,
		inner sep=0.4em, 
		thick
	},
	2pointvertex/.style={
		draw,
		circle,
		inner sep=0.55em,
		thick
	},
	edge/.style={
		draw, 
		black, 
		solid,
		thick
	}]
	
	\drawVerticesWithXOffsetPrefix{-3}{};
	\drawVerticesWithXOffsetPrefix{9}{c};
	
	\path[->, thick, draw] (2em,  0.5em) -- (4em,  0.5em) {};
	\path[->, thick, draw] (4em, -0.5em) -- (2em, -0.5em) node [midway, label=below:{\small inverse}] {};
	
	\path[edge] (i1) -- (j1);
	\path[edge] (i2) -- (j2);
	\path[edge] (v3) -- (v4);
	\path[edge] (v5) -- (v6);
	
	\path[edge] (ci1) -- (cv3);
	\path[edge] (ci2) -- (cv5);
	\path[edge] (cj1) -- (cv4);
	\path[edge] (cj2) -- (cv6);
	
	% phantom
	\coordinate (cv13) at (-3em,     4.5em);
	\coordinate (cv9)  at (4.242em,  4.5em); 
	\coordinate (cv14) at (-3em,    -4.5em);
	\coordinate (cv10) at (4.242em, -4.5em);
%	\draw[draw, white, solid, thick] (cv13) to [out=75, in=105, looseness=1]   (cv9);
%	\draw[draw, white, solid, thick] (cv14) to [out=-75, in=-105, looseness=1] (cv10);
\end{tikzpicture}
	}
		\caption{\centering The $d$-switching.}
		\label{fig:typeIswitching}
 \end{subfigure}
\hfill
 \begin{subfigure}{.5\textwidth}
 	\resizebox{\textwidth}{!}{
			\def\node#1{}
\newcommand{\drawVerticesWithXOffsetPrefix}[2]{
	\def\y{1.5em}
	\def\x{3em}
	\def\sqx{2.121em}
	\def\spx{.4em}
	\def\px{.5em}
	\def\a{10}
	\def\ld{-.25em}
	% v1 v5 v7 v9
	% v1
	\coordinate (#2v1coords)      at (#1em,  \y);
	\node[smallpoint] (#2v1p1)    at (#1em - \spx,  \y) {};
	\node[smallpoint] (#2v1p2)    at (#1em,         \y) {};
	\node[smallpoint] (#2v1p3)    at (#1em + \spx,  \y) {};
	\node[2pointvertex, label={[label distance=\ld]\a:$v_1$}] at (#2v1coords) {};
	% v5
	\coordinate (#2v5coords) at (#1em + \x,  \y);
	\node[point] (#2v5) at (#2v5coords) {};
	\node[2pointvertex, label={[label distance=\ld]\a:$v_5$}] at (#2v5coords) {};
	% v7
	\coordinate (#2v7coords) at (#1em,  3*\y);
	\node[point] (#2v7) at (#2v7coords) {};
	\node[2pointvertex, label={[label distance=\ld]100:$v_7$}] at (#2v7coords) {};
	% v9
	\coordinate (#2v9coords) at (#1em + \sqx, 3*\y);
	\node[point] (#2v9) at (#2v9coords) {};
	\node[2pointvertex] at (#2v9coords) {}; 

	% v2 v6 v8 v10
	% v2
	\coordinate (#2v2coords)      at (#1em, -\y);
	\node[smallpoint] (#2v2p1)    at (#1em - \spx, -\y) {};
	\node[smallpoint] (#2v2p2)    at (#1em       , -\y) {};
	\node[smallpoint] (#2v2p3)    at (#1em + \spx, -\y) {};
	\node[2pointvertex, label={[label distance=\ld]-\a:$v_2$}] at (#2v2coords) {};
	% v6
	\coordinate (#2v6coords) at (#1em + \x, -\y);
	\node[point] (#2v6) at (#2v6coords) {};
	\node[2pointvertex, label={[label distance=\ld]-\a:$v_6$}] at (#2v6coords) {};
	% v8
	\coordinate (#2v8coords) at (#1em, -3*\y);
	\node[point] (#2v8) at (#2v8coords) {};
	\node[2pointvertex] at (#2v8coords) {};
	% v10
	\coordinate (#2v10coords) at (#1em + \sqx, -3*\y);
	\node[point] (#2v10) at (#2v10coords) {};
	\node[2pointvertex] at (#2v10coords) {}; 
	
	% v3 v11 v13 v15
	% v3
	\node[smallpoint] (#2v3p1) at (#1em - \x - \spx, \y) {};
	\node[smallpoint] (#2v3p2) at (#1em - \x,        \y) {};
	\node[smallpoint] (#2v3p3) at (#1em - \x + \spx, \y) {};
	\node[2pointvertex, label={[label distance=\ld]\a:$v_3$}] (#2v3) at (#1em - \x,  \y) {};
	% v11
	\coordinate (#2v11coords) at (#1em - \x, 3*\y) {};
	\node[point] (#2v11) at (#2v11coords) {};
	\node[2pointvertex]  at (#2v11coords) {};
	% v13
	\coordinate (#2v13coords) at (#1em - \x - \sqx, 3*\y) {};
	\node[point] (#2v13) at (#2v13coords) {};
	\node[2pointvertex]  at (#2v13coords) {};
	% v15
	\coordinate (#2v15coords) at (#1em - 2*\x, \y) {};
	\node[point] (#2v15) at (#2v15coords) {};
	\node[2pointvertex]  at (#2v15coords) {};
	
	% v4 v12 v14 v16
	% v4
	\node[smallpoint] (#2v4p1) at (#1em - \x - \spx, -\y) {};
	\node[smallpoint] (#2v4p2) at (#1em - \x,        -\y) {};
	\node[smallpoint] (#2v4p3) at (#1em - \x + \spx, -\y) {};
	\node[2pointvertex, label={[label distance=\ld]-\a:$v_4$}] (#2v4) at (#1em - \x, -\y) {};
	% v12
	\coordinate (#2v12coords) at (#1em - \x, -3*\y) {};
	\node[point] (#2v12) at (#2v12coords) {};
	\node[2pointvertex]  at (#2v12coords) {};
	% v14
	\coordinate (#2v14coords) at (#1em - \x - \sqx, -3*\y) {};
	\node[point] (#2v14) at (#2v14coords) {};
	\node[2pointvertex]  at (#2v14coords) {};
	% v16
	\coordinate (#2v16coords) at (#1em - 2*\x, -\y) {};
	\node[point] (#2v16) at (#2v16coords) {};
	\node[2pointvertex]  at (#2v16coords) {};
}
\begin{tikzpicture}[
	point/.style={
		draw,
		circle,
		inner sep=0,
		fill,
		minimum width=0.35em,
		minimum height=0.35em
	},
	smallpoint/.style={
		draw,
		circle,
		inner sep=0,
		fill,
		minimum width=0.28em,
		minimum height=0.28em,
	},
	heavyvertex/.style={
		draw, 
		circle,
		inner sep=0.75em, 
		thick
	},
	nonheavyvertex/.style={
		draw, 
		circle,
		inner sep=0.4em, 
		thick
	},
	2pointvertex/.style={
		draw,
		circle,
		inner sep=0.55em,
		thick
	},
	edge/.style={
		draw, 
		black, 
		solid,
		thick
	}]
	
	\drawVerticesWithXOffsetPrefix{-3}{};
	\drawVerticesWithXOffsetPrefix{12}{c};
	
	\path[->, thick, draw] (2em,  0.5em) -- (4em,  0.5em) {};
	\path[->, thick, draw] (4em, -0.5em) -- (2em, -0.5em) node [midway, label=below:{\small inverse}] {};
	
	\path[edge] (v1p1) -- (v7);
	\path[edge] (v1p2) -- (v9);
	\path[edge] (v1p3) -- (v5);
	\path[edge] (v2p1) -- (v8);
	\path[edge] (v2p2) -- (v10);
	\path[edge] (v2p3) -- (v6);
	\path[edge] (v3p1) -- (v15);
	\path[edge] (v3p2) -- (v13);
	\path[edge] (v3p3) -- (v11);
	\path[edge] (v4p1) -- (v16);
	\path[edge] (v4p2) -- (v14);
	\path[edge] (v4p3) -- (v12);
	
	\path[edge] (cv3p3) -- (cv1p1);
	\path[edge] (cv4p3) -- (cv2p1);
	\path[edge] (cv3p2) -- (cv4p2);
	\path[edge] (cv3p1) -- (cv4p1);
	\path[edge] (cv1p2) -- (cv2p2);
	\path[edge] (cv1p3) -- (cv5);
	\path[edge] (cv2p3) -- (cv6);
	\path[edge] (cv11)  -- (cv7);
	\path[edge] (cv12)  -- (cv8);
	\path[edge] (cv15)  -- (cv16);
	\draw[edge] (cv13) to [out=75, in=105, looseness=1]   (cv9);
	\draw[edge] (cv14) to [out=-75, in=-105, looseness=1] (cv10);
\end{tikzpicture}
 	}
		\caption{\centering The type $\tau = (1,2,0)$ booster switching.}
		\label{fig:typeVbswitching}
 \end{subfigure}
 \vspace{1em}
 \caption{\centering Switchings used in Phase 5.}
\end{figure}

Phase 5 repeats the following steps until all double-edges are removed:

\begin{enumerate}
  \item Choose switching type $\tau$ with probability $\rho_{\tau}$ or restart with the probability $1 - \sum_{\tau} \rho_{\tau}$.
 \item Pick a uniform random $\tau$-switching $S = (G, G')$. If $\tau = d$, sample a uniform random double-edge and two uniform random pairs, and switch them as shown in \autoref{fig:typeIswitching}.
If $\tau \neq d$, sample a number of uniform random $k$-stars, as exemplified in \autoref{fig:typeVbswitching}, and switch them into the intended doublet and a number of additional simple edges.
 \item Restart (f-reject) if $S$ is not valid.
The switching is valid if, for $\tau = d$, it removes the targeted double-edge or, for $\tau \neq d$, it creates the intended doublet without adding or removing (other) multi-edges or loops.
 \item Restart (b-reject) with probability $1 - \underline{b}_d(G';0) \underline{b}_d(G';1) / (b_d(G', \emptyset) b_d(G', v_1 v_3 v_5))$.
 \item Restart (b-reject) with probability $1 - \underline{b}_{\tau}(G') / b_{\tau}(G')$.
 \item Set $G \gets G'$.
\end{enumerate}

\noindent
For the b-rejection, incremental relaxation is used. Step $4$ contains the sub-rejections for the doublet and step $5$ the rejections for any additional pairs created by the switching.
In general, the number of additional pairs created by type $\tau = (m_1, m_2, m_3)$ is $k = I_{m_1 \geq 1} m_1 + I_{m_2 \geq 1} (m_2 + 2) + I_{m_3 \geq 1} (m_3 + 2)$ where $I$ denotes the indicator function.

For step $4$: first, set $b_d(G', \emptyset)$ to the number of simple ordered two-stars in $G'$.
Then, set $b_d(G', v_1 v_3 v_5)$ to the number of light simple ordered two-stars in $G'$ that do not share any nodes with the two-star $v_1 v_3 v_5$ created by $S$.
The lower bounds are:
\begin{align}
 \underline{b}_d(G';0) &= M_2 - 8 m_d d_1
 \\
 \underline{b}_d(G';1) &= L_2 - 8 m_d d_h - 6 B_1 - 3 d_h^2
\end{align}

For step $5$: let $k$ be the number of additional pairs created by the switching.
Then, for pair $1 \leq i \leq k$, set $b_{\tau}(G', \overline{V}_{i+1}(S))$ to the number of simple ordered pairs in $G'$, that a) do not share nodes with the doublet or the previous $i - 1$ pairs, and b) have no edges that should have been removed by the switching (\eg in \autoref{fig:typeVbswitching}, $v_1 v_7$ cannot be an edge).
Finally, set $b_{\tau}(G') = \prod_{i=1}^k b_{\tau}(G', \overline{V}_{i+1}(S))$.
The lower bound is:
\begin{align}
\underline{b}_{\tau}(G') &= \prod\nolimits_{i=1}^{k} \underline{b}_\tau(G';i+1)
 \\
\underline{b}_\tau(G';i+1) &= M_1 - 4 m_d - 12 d_1 - 4 (i - 1) d_1 - 2 A_2
\end{align}

The type probabilities are computed as follows.
When initializing Phase~5, set $\rho_d = 1 - \xi$ where
\begin{align}
 \xi = \frac{32 M_2^2}{M_1^3} + \frac{36 M_4 L_4}{M_2 L_2 M_1^2} + \frac{32 M_3^2}{M_1^4},
\end{align}
and set $\rho_{\tau} = 0$ for all types $\tau \neq d$.
The probabilities are updated after a switching $S = (G, G')$ is performed that changes the number of double-edges.
Then, first, let $i$ denote the new number of double-edges in $G'$, let $i_1$ denote the initial number of double-edges after first entering Phase 5, and let $\rho_{d}(i)$ denote the probability of type $d$ on a graph with $i$ double-edges.
Now, define a parameter $x_i$:
\begin{align}
 x_i = x_{i + 1} \rho_{d}(i+1) \frac{\underline{b}_d(G';0) \underline{b}_d(G';1)}{\overline{f}_d(i + 1)} + 1,
\end{align}
where $x_{i_1} = 1$ and $\overline{f}_d(i) = 4 i M_1^2$.

Then, to update the probability of a booster switching type $\tau$, there are two cases: (1) if $\tau = (m_1, m_2, m_3)$ adds double-edges (\ie $\max(m_1, m_2, m_3) = 2$) and the number of double-edges if a switching of this type was performed would be higher than $i_1 - 1$, set $\rho_{\tau} = 0$.
Otherwise, (2) let $i'$ denote the new number of double-edges if a switching of this type was performed, and define
\begin{align}
 \overline{f}_{\tau} &= M_{k_1} L_{k_1} (I_{k_2 \geq 2} M_{k_2}^2 + I_{k_2 \leq 1} 1) (I_{k_3 \geq 2} M_{k_3}^2 + I_{k_3 \leq 1} 1),
\end{align}
where $k_1 = m_1 + 2$, $k_2 = m_2 + 1$, $k_3 = m_3 + 1$.

Then set:
\begin{align}
 \rho_{\tau} &= \frac{x_{i' + 1}}{x_i} \rho_{d_{i' + 1}} \frac{\overline{f}_{\tau}}{\underline{b}_{\tau}(i') \overline{f}_d(i' + 1)}
 \\
 \rho_d &= 1 - \rho_{1,0,0} - \xi.
\end{align}

If Phase 5 terminates, all non-simple edges are removed, and the final graph $G$ is output.

\section{Adjustments to the algorithm}
\label{sec:new-switchings}
\label{sec:proofs}
In this section, we describe our additions and adjustments to the \ipwl algorithm sketched in \cite{DBLP:conf/focs/ArmanGW19}.

\subsection{New switchings in Phase 4}
Phase 4 of \pld only uses the $t$-switching \cite{DBLP:conf/soda/GaoW18}.
There, the rejection probability is small enough so that no booster switchings are needed.
The overall running-time of Phase 4 in \pld however, is superlinear, as computing the probability of a b-rejection requires counting the number of valid $t$-switchings that produce a graph.
By using incremental relaxation \cite{DBLP:conf/focs/ArmanGW19}, the cost of computing the b-rejection probability becomes sublinear, as it only requires us to count simpler structures in the graph.
However, when applying incremental relaxation to Phase 4, the probability of a b-rejection increases, as the lower bounds on the number of those structures are less tight, and the overall running-time remains superlinear.

We address this issue by using booster switchings in Phase 4 to reduce the b-rejection probability (analogous to Phase 5 of \pld).
To this end, we add three new switchings: the $t_a$, $t_b$ and $t_c$ switching (see \autoref{subsec:phase4}).
This is done entirely analogous to Phase 5 of \pld, which also uses multiple switchings in the same Phase.
We first derive an equation for the expected number of times that a graph is produced by a type $\tau \in \{t, t_a, t_b, t_c\}$ switching.
Then we set equal the expected number of times for each graph in the same set $\mathcal{S}(\rrr)$.
The resulting system of equations is fully determined by choosing an upper bound $\varepsilon$ on the probability of choosing a type $\tau \neq t$.
We can then derive the correct probabilities $\rho_\tau$ for each type as a function of $\rrr$.

\begin{lemma}
\label{th:ph4b}
 Let $\degseq$ be a plib sequence with exponent $\gamma > 21/10 + \sqrt{61}/10 \approx 2.88102$.
 Then, given $\degseq$ as input, the probability of a b-rejection in Phase 4 of \ipwl is $o(1)$.
\end{lemma}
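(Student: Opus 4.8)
The plan is to treat the two backward-rejection steps of Phase~4 separately and combine them with a union bound, so that it suffices to show that each of the three acceptance ratios $\underline{b}_t(G';0)/b_t(G',\emptyset)$, $\underline{b}_t(G';1)/b_t(G',v_1v_3v_5v_7)$ and $\underline{b}_\tau(G')/b_\tau(G')$ is $1-o(1)$ for the graphs $G'$ that Phase~4 actually produces. Each lower bound is obtained from the corresponding exact count by subtracting a fixed collection of nonnegative correction terms, and each exact count is at most its leading term ($b_t(G',\emptyset)\le M_3$, $b_t(G',v_1v_3v_5v_7)\le L_3$, and every factor of $b_\tau(G')$ is at most $M_1$). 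Hence the whole statement reduces to proving that every correction term appearing in $\underline{b}_t(G';0)$, $\underline{b}_t(G';1)$ and $\underline{b}_\tau(G')$ is of strictly smaller order than the leading term from which it is subtracted; once this holds, multiplying the two triplet ratios with the $k=\Oh{1}$ pair ratios keeps the per-iteration acceptance probability at $1-o(1)$, and since Phase~4 runs for $\Oh{m_t}$ iterations the total b-rejection probability stays $o(1)$.

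First I would assemble the ingredients. From the Phase~3--5 preconditions I may assume $m_t\le 2L_3M_3/M_1^3$ and $m_d\le 4L_2M_2/M_1^2$, which is how the counts of remaining triple- and double-edges enter. From the plib hypothesis together with $h=n^{1-\delta(\gamma-1)}$ I would record the power-law estimates $M_1=\Theta(n)$ and $M_k=\Theta(n^{k/(\gamma-1)})$ for $k>\gamma-1$, the extreme degrees $d_1=\Theta(n^{1/(\gamma-1)})$ and $d_h=\Theta(n^{\delta})$, and the matching estimates for the light truncations $L_k$ and for the two neighbour sums $A_2=\sum_{i\le d_1}d_i$ and $B_k=\sum_{i\le d_1}[d_{h+i}]_k$ that appear in the lower bounds.

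With these in hand the three ratios are checked term by term. For the step~4 sub-rejection at the empty state it is enough that $m_td_1^2=o(M_3)$ and $m_dd_1^2=o(M_3)$; substituting the precondition and moment estimates turns each into an inequality between powers of $n$, and the upper endpoint of the admissible interval for $\delta$ is exactly what makes $m_td_1^2=o(M_3)$. For the step~4 sub-rejection at the triplet state I must show that $m_td_h^2$, $m_dd_h^2$, $d_h^3$, the neighbour sums $B_2,B_3$ and the mixed terms $(m_t+m_d)B_2$ are all $o(L_3)$. For step~5 every factor is bounded by $M_1=\Theta(n)$, so it remains to show $m_t,m_d,d_1,A_2=o(n)$ uniformly over the constantly many pairs $i\le k$, which follows immediately from the same estimates.

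The hard part will be the comparisons against the light third moment $L_3$: the neighbour-sum corrections $B_2,B_3$ and the products $m_tB_2,\ m_dB_2$ carry by far the least slack, since $L_3$ is itself dominated by exactly the high-degree light nodes over which these sums range. It is the demand that \emph{all} of these be subleading \emph{simultaneously} with the $M_3$-comparison that pins the threshold: the interval for $\delta$ prescribed in \autoref{subsec:p12preconditions} shrinks to a single point precisely at $\gamma=21/10+\sqrt{61}/10$, so the crux of the argument is to exhibit, for every $\gamma$ strictly above this value, a choice of $\delta$ in its interval making each Phase~4 correction subleading at once. I would therefore isolate the two tightest inequalities --- one from the $M_3$-comparison bounding $\delta$ from above and one from the $L_3$-comparison bounding it from below --- and verify that their common feasibility region is nonempty if and only if $\gamma$ exceeds the stated threshold.
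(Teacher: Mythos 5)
Your overall strategy---compare each lower bound term by term against the exact count it underestimates, identify the dominant correction terms, and aggregate over the two sub-rejection steps and all iterations---is the same as the paper's, and your observation that the admissible interval for $\delta$ degenerates exactly at $\gamma = 21/10+\sqrt{61}/10$ correctly locates where the threshold comes from. However, there is a genuine gap in your final aggregation step. You reduce the lemma to showing that each acceptance ratio is $1-o(1)$ per iteration, i.e.\ that every correction term is of strictly smaller order than its leading term, and then conclude that ``since Phase~4 runs for $\Oh{m_t}$ iterations the total b-rejection probability stays $o(1)$.'' This does not follow: $m_t$ can be as large as $2L_3M_3/M_1^3 = \Theta\bigl(n^{6/(\gamma-1)-3}\bigr)$, a positive power of $n$ for $\gamma$ near the threshold, so a per-iteration rejection probability that is merely $o(1)$ can easily accumulate to $1-o(1)$ over all iterations. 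What you actually need is that the per-iteration rejection probability is $o\bigl(M_1^3/(L_3M_3)\bigr)$, i.e.\ $o(1/m_t)$, which is a much stronger quantitative requirement than subleading order of the correction terms.

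This is precisely where the paper's proof does more work than your reduction allows: it bounds the per-iteration b-rejection probability by the dominant ratio $\Oh{B_3/L_3}$ (coming from the $B_3$ term in $\underline{b}_t(G';1)$ against the leading term $L_3$, the comparison you correctly flag as tightest), multiplies by the iteration count $\Oh{L_3M_3/M_1^3}$, and verifies that the product $\Oh{M_3B_3/M_1^3} = \Oh{n^{3/(\gamma-1)}\,n^{1-\delta(\gamma-4)}/n^3}$ is $o(1)$ for $\gamma$ above the threshold. Your plan can be repaired by replacing the qualitative condition ``each correction term is subleading'' with the quantitative one ``each correction-to-leading ratio, summed over all $\Oh{L_3M_3/M_1^3}$ iterations, is $o(1)$''; the feasibility analysis of $\delta$ you propose then has to be run against these stronger inequalities, not against mere subleading order.
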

 
\begin{proof}
We first show that the number of iterations in Phase 4 is at most $\Oh{L_3 M_3 / M_1^3}$.
First, recall that the algorithm only enters Phase 4 if the graph satisfies the Phase 3, 4 and 5 preconditions.
In particular, the graph may contain at most $L_3 M_3 / M_1^3$ triple-edges.
Phase 4 terminates once all triple-edges are removed.
In each iteration, a triple-edge is removed if we choose a type $t$-switching.
The probability of choosing a $t$-switching is $\rho_t = 1 - \varepsilon$, where $\varepsilon = 28 M_2^2 / M_1^3$, and it can be verified that the probability of choosing any other switching is bounded by $\varepsilon$.
In addition, we know that $M_k = \Oh{n^{k/\gamma - 1}}$ for $k \geq 2$ and $M_1 = \Theta(n)$ \cite{DBLP:conf/soda/GaoW18}, and we have $\varepsilon = \Oh{n^{2/(\gamma - 1)} / n^3} = o(1)$ assuming that $\gamma > 21/10 + \sqrt{61}/10 \approx 2.88102$.
Thus, a triple-edge is removed in each iteration with probability $1 - o(1)$, and as the graph may contain at most $L_3 M_3 / M_1^3$ triple-edges, the total number of iterations is at most $\Oh{L_3 M_3 / M_1^3}$.

Now, we show that the probability of a b-rejection vanishes with $n$.
First, it is easy to verify that the probability of a b-rejection in step $4$ dominates the probability of a rejection in step $5$ (compare \autoref{subsec:phase4}).
The probability of a b-rejection in step $4$ is $1 - \underline{b}_t(G';0) \underline{b}_t(G';1) / (b_t(G', \emptyset)  b_t(G', v_1 v_3 v_5 v_7))$. It can be shown that $\underline{b}_t(G';0) = \Omega(M_3)$, $\underline{b}_t(G';1) = \Omega(L_3 - B_3)$, and $b_t(G', \emptyset) b_t(G', v_1 v_3 v_5 v_7) \leq M_3 L_3$.
Thus, the probability of a b-rejection is at most $\Oh{M_3 B_3 / M_3 L_3}$.
In addition, as shown above, the number of iterations of Phase 4 is at most $\Oh{L_3 M_3 / M_1^3}$.
Then, the overall probability of a b-rejection during all of Phase 4 is at most $\Oh{M_3 B_3 / M_1^3} = \Oh{n^{3/(\gamma - 1)} n^{1 - \delta (\gamma - 4)} / n^3} = o(1)$ for $\gamma > 21/10 + \sqrt{61}/10 \approx 2.88102$.
\end{proof}

\subsection{New switchings in Phase 5}
Phase 5 of \pld uses the type-I switching (this is the same as the $d$-switching in \ipwl), as well as a total of six booster switchings called type-III, type-IV, type-V, type-VI and type-VII.
These booster switchings create the doublets where each of the "bad edges" can either be a non-edge or single-edge, i.e. $\max\{m_1, m_2, m_3\} = 1$.
For Phase 5 of \pld, this suffices to ensure that the b-rejection probability is small enough.
However, similar to Phase 4, applying incremental relaxation to reduce the computational cost increases the rejection probability, leading to a superlinear running-time overall.

To further reduce the probability of a b-rejection, we add booster switchings that create the doublets where one or more of the "bad edges" is a double-edge, i.e. $\max\{m_1, m_2, m_3\} = 2$ (see \autoref{subsec:phase5}).
We then integrate the new switchings to Phase 5 of \ipwl by deriving the correct probabilities $\rho_\tau$ and increasing the constant $\xi$ used to bound the probabilities of the types $\tau \notin \{d, (1, 0, 0)\}$.

\begin{lemma}
\label{th:ph5b}
 Let $\degseq$ be a plib sequence with exponent $\gamma > 21/10 + \sqrt{61}/10 \approx 2.88102$.
 Then, given $\degseq$ as input, the probability of a b-rejection in Phase 5 of \ipwl is $o(1)$.
\end{lemma}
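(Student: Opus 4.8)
The plan is to mirror the proof of \autoref{th:ph4b} almost verbatim, with double-edges in place of triple-edges and two-stars in place of three-stars. First I would bound the number of iterations of Phase 5. Since the algorithm enters Phase 5 only once the Phase 3, 4 and 5 preconditions hold, the graph contains at most $m_d \le 4 L_2 M_2 / M_1^2$ double-edges, and an iteration removes one precisely when a valid, unrejected $d$-switching is performed, which is chosen with probability $\rho_d$. It therefore suffices that the total probability of the booster types is $o(1)$, \ie that $\xi = o(1)$ and $\rho_{1,0,0} = o(1)$. Using the moment estimates $M_k = \Oh{n^{k/(\gamma-1)}}$ for $k \ge 2$ and $M_1 = \Theta(n)$ from \cite{DBLP:conf/soda/GaoW18}, I would check the three summands of $\xi$ individually: $32 M_2^2/M_1^3$ and $32 M_3^2/M_1^4$ are $o(1)$ for $\gamma > 5/2$, while for the middle term $36 M_4 L_4 / (M_2 L_2 M_1^2)$ one uses that every light node has degree at most $d_{h+1} = \Oh{n^{\delta}}$, so $L_4 = \Oh{d_h^2 L_2}$ and the term becomes $\Oh{n^{2/(\gamma-1)+2\delta-2}} = o(1)$ since $\delta < 1 - 1/(\gamma-1)$. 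Hence each iteration removes a double-edge with probability $1 - o(1)$ and the number of iterations is $\Oh{L_2 M_2 / M_1^2}$.

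Next I would bound the per-iteration b-rejection probability, arguing as in Phase 4 that the step-4 rejection dominates the step-5 rejection uniformly over all booster types $\tau$ with $\max(m_1, m_2, m_3) \le 2$ (the $k \le 9$ additional pairs each cost only a lower-order correction). The step-4 rejection probability equals $1 - \underline{b}_d(G';0)\, \underline{b}_d(G';1) / (b_d(G', \emptyset)\, b_d(G', v_1 v_3 v_5))$. Counting simple ordered two-stars gives $b_d(G', \emptyset) \le M_2$ and counting the light ones gives $b_d(G', v_1 v_3 v_5) \le L_2$, so the denominator is at most $M_2 L_2$. For the numerator I would show $\underline{b}_d(G';0) = \Omega(M_2)$, since $m_d d_1 = \Oh{L_2 M_2 d_1 / M_1^2} = o(M_2)$ for $\gamma > 5/2$, and $\underline{b}_d(G';1) = \Omega(L_2 - B_1 - d_h^2)$, the correction terms $8 m_d d_h$, $6 B_1$ and $3 d_h^2$ being dominated by $B_1$ and $d_h^2$. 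This yields a per-iteration b-rejection probability of $\Oh{(B_1 + d_h^2)/L_2}$.

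Combining the two bounds, the probability of a b-rejection anywhere in Phase 5 is at most $\Oh{L_2 M_2 / M_1^2} \cdot \Oh{(B_1 + d_h^2)/L_2} = \Oh{M_2 (B_1 + d_h^2)/M_1^2}$. Substituting $M_2 = \Oh{n^{2/(\gamma-1)}}$, $B_1 = \Oh{d_1 d_h} = \Oh{n^{1/(\gamma-1)+\delta}}$, $d_h = \Oh{n^{\delta}}$ and $M_1 = \Theta(n)$ gives $\Oh{n^{3/(\gamma-1)+\delta-2} + n^{2/(\gamma-1)+2\delta-2}}$. In the relevant regime $1/(\gamma-1) > \delta$, so the $B_1$-term dominates; it is $o(1)$ exactly when $\delta < 2 - 3/(\gamma-1)$, which is implied by the admissible upper bound $\delta < (2-3/(\gamma-1))/(4-\gamma)$ together with $\gamma > 5/2$. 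Such a $\delta$ exists precisely when $\gamma > 21/10 + \sqrt{61}/10$, which gives the claim.

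The main obstacle will be the structural bookkeeping rather than the asymptotics. Concretely, I expect the delicate steps to be (i) certifying that the explicit correction terms in $\underline{b}_d(G';0)$ and $\underline{b}_d(G';1)$ really lower-bound $b_d(G', \emptyset)$ and $b_d(G', v_1 v_3 v_5)$ \emph{uniformly} over every graph in the relevant set $\mathcal{S}(\rr{G'})$, \ie that they over-count all two-stars incompatible with $v_1 v_3 v_5$; (ii) confirming that the step-4 rejection dominates the step-5 rejection simultaneously for all booster types; and (iii) verifying $\rho_{1,0,0} = o(1)$ so that the iteration count holds. Once these are settled, the remaining analysis reduces to identifying $B_1/L_2$ as the binding error ratio and matching the exponent $3/(\gamma-1)+\delta-2 < 0$ to the prescribed range of $\delta$.
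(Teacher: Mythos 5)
Your proposal follows essentially the same route as the paper: bound the iteration count by $\Oh{L_2 M_2/M_1^2}$ via $\xi = o(1)$ and $\rho_{1,0,0}=o(1)$, then multiply by the per-iteration step-4 rejection probability obtained from $b_d(G',\emptyset)\,b_d(G',v_1v_3v_5) \le M_2 L_2$ and the lower bounds $\underline{b}_d(G';0)=\Omega(M_2)$, $\underline{b}_d(G';1)=\Omega(L_2 - \text{corrections})$. The only divergence is in the last arithmetic step: the paper absorbs the dominant correction $B_1$ into $A_2=\sum_{i=1}^{d_1}d_i=\Oh{n^{(2\gamma-3)/(\gamma-1)^2}}$, giving a $\delta$-free exponent, whereas your bound $B_1\le d_1 d_h$ introduces the condition $\delta < 2-3/(\gamma-1)$, which your appeal to the admissible range only guarantees when $4-\gamma\ge 1$ (\ie $\gamma\le 3$) --- harmless near the threshold $\gamma\approx 2.881$, but the paper's $A_2$ bound avoids the issue entirely.
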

 
\begin{proof}
Analogous to the proof of \autoref{th:ph4b}, we first bound the number of iterations in Phase 5.
In each iteration, a double-edge is removed if we chose the $d$-switching, and a $d$-switching is chosen with probability $1 - \rho_{1,0,0} - \xi$, where $\xi = 32 M_2^2 / M_1^3 + 36 M_4 L_4 / M_2 L_2 M_1^2 + 32 M_3^2 / M_1^4$.
It can be verified that the probability of choosing any of the other switchings is bounded by $\xi$.
In Phase 5 of \pld, the probability of choosing a type-I switching is set to $1 - \rho_{III} - \xi'$, where $\xi' = 32 M_2^2 / M_1^3$.
As the probability of not choosing a type-I switching in \pld vanishes with $n$, we know that the terms $\rho_{1,0,0} = \rho_{III}$ and $32 M_2^2 / M_1^3$ vanish with $n$.
For the remaining two terms, first note that $L_{k+1} \leq L_{k} d_h = \Oh{L_k n^{\delta}}$ and $M_{k+1} \leq M_{k} d_1 = \Oh{M_k n^{1/(\gamma - 1)}}$ for $k~\geq~2$  \cite{DBLP:conf/soda/GaoW18}.
Then, we have $36 M_4 L_4 / M_2 L_2 M_1^2 \leq M_2 d_1^2 L_2 d_h^2 / M_2 L_2 M_1^2 = \Oh{n^{2/(\gamma - 1) + 2 \delta} / n^2} = o(1)$, and $32 M_3^2 / M_1^4 = \Oh{n^{6/(\gamma - 1)} / n^4} = o(1)$ assuming $\gamma > 21/10 + \sqrt{61}/10 \approx 2.88102$.
Thus, a double-edge is removed in each iteration with probability $1 - o(1)$, and as a graph satisfying the Phase 3, 4 and 5 preconditions may contain at most $L_2 M_2 / M_1^2$ double-edges, the total number of iterations in Phase 5 is at most $\Oh{L_2 M_2 / M_1^2}$.

We now show that the probability of a b-rejection is small enough.
Again, the probability of a b-rejection in step $4$ dominates the probability of a b-rejection in step $5$ (compare \autoref{subsec:phase5}).
The rejection probability in step $4$ is $1 - \underline{b}_d(G';0) \underline{b}_d(G';1) / (b_d(G', \emptyset) b_d(G', v_1 v_3 v_5))$.
Note that $\underline{b}_d(G';0) = \Omega(M_2), \underline{b}_d(G';1) = \Omega(L_2 - A_2)$, and $b_d(G', \emptyset) b_d(G', v_1 v_3 v_5) \leq M_2 L_2$.
Then, the overall probability of a b-rejection in Phase 5 is at most $\Oh{M_2 A_2 / M_1^2} = \Oh{n^{2/(\gamma - 1)} n^{(2 \gamma - 3)/(\gamma - 1)^2} / n^2} = o(1)$ for $\gamma > 21/10 + \sqrt{61}/10 \approx 2.88102$.
\end{proof}

\subsection{Expected running-time}

We now use \autoref{th:ph4b} and \autoref{th:ph5b} to bound the expected running-time of the algorithm.
\begin{theorem}
\label{th:ipwlrt}
 Let $\degseq$ be a plib sequence with exponent $\gamma > 21/10 + \sqrt{61}/10 \approx 2.88102$.
 Then, given $\degseq$ as input, the expected running-time of \ipwl is $O(n)$.
\end{theorem}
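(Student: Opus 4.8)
The plan is to write the expected total running-time as the expected number of attempts times the expected cost of a single attempt, where an \emph{attempt} is one run of the algorithm from the configuration-model generation until it either outputs a graph or restarts. Because a restart resets the algorithm to its very first step, successive attempts are independent and identically distributed, and the number of attempts $N$ is the index of the first successful one. Writing $T$ for the (random) cost of one attempt, Wald's identity gives that the expected total running-time equals $\mathbb{E}[N]\,\mathbb{E}[T]$, since $\{N \ge i\}$ depends only on attempts $1,\dots,i-1$ and is thus independent of the cost of attempt $i$. It therefore suffices to establish (i) $\mathbb{E}[T] = \Oh{n}$ and (ii) $\mathbb{E}[N] = \Oh{1}$, the latter being equivalent to a lower bound $\Omega(1)$ on the probability $p$ that a single attempt terminates without restarting.

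For (i) I would bound the cost of an attempt phase by phase. Drawing the initial graph from the configuration model connects $M_1/2$ marble pairs and hence costs $\Oh{M_1} = \Oh{n}$, using $M_1 = \Theta(n)$ from \cite{DBLP:conf/soda/GaoW18}; the precondition checks cost $\Oh{n}$ once the moments $M_k, H_k, L_k$ have been precomputed. Conditioned on the preconditions holding, the number of iterations of each phase is deterministically bounded by the corresponding precondition cap on the number of non-simple edges of the relevant kind, namely $\Oh{M_2^2/M_1^2}$ and $\Oh{M_2/M_1}$ heavy multi-edges and loops in Phases 1 and 2, and $\Oh{L_2/M_1}$, $\Oh{L_3 M_3/M_1^3}$ and $\Oh{L_2 M_2/M_1^2}$ single-loops, triple-edges and double-edges in Phases 3, 4 and 5; by the moment estimates of \cite{DBLP:conf/soda/GaoW18} each of these is $o(n)$. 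The key point is that incremental relaxation replaces the expensive global counts of valid switchings by counts of simple local structures (\eg $b_t(G',\emptyset)$, $b_d(G',\emptyset)$, the heavy-neighbour counts $Y, Y_1, Y_2$, and the pair counts $Z_i$), which are maintained in a constant number of running counters updatable in time polynomial in the degrees of the $\Oh{1}$ nodes touched by a switching; summed over the $o(n)$ iterations this contributes $o(n)$. Hence the $\Oh{n}$ generation dominates and $\mathbb{E}[T] = \Oh{n}$.

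For (ii) I would bound away from $1$ the probability that an attempt restarts, by summing the restart probabilities of its individual sources. The Phase 1--2 and Phase 3--5 preconditions each hold with probability $\Omega(1)$, a concentration statement proved for the \pld setting in \cite{DBLP:conf/soda/GaoW18, janson2009probability}. Each f-rejection restarts only when a uniformly chosen switching is invalid, and for a plib sequence the invalidating events (a sampled node coinciding with an endpoint, landing on a heavy node, or creating a second non-simple edge) are individually $o(1)$; summed over the $o(n)$ iterations of a phase this stays $o(1)$. The b-rejections are controlled the same way: for Phases 4 and 5 the total b-rejection probability is $o(1)$ by \autoref{th:ph4b} and \autoref{th:ph5b}, and the analogous bounds for Phases 1--3 follow from the same moment estimates as in \pld. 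Adding these $o(1)$ contributions to the constant precondition-success probability leaves $p = \Omega(1)$, so $\mathbb{E}[N] = 1/p = \Oh{1}$ and the product $\mathbb{E}[N]\,\mathbb{E}[T] = \Oh{n}$ follows.

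The main obstacle is (ii): one must check that \emph{all} per-step f- and b-rejection probabilities, summed over every phase and added to the precondition-failure probabilities, still leave the per-attempt success probability bounded below by a constant. This is delicate because each rejection probability depends on the random current graph through quantities such as $Y_i$, $Z_i$, $B_k$ and $A_2$, and \autoref{th:ph4b} and \autoref{th:ph5b} only certify the dominant step-4 b-rejections of Phases 4 and 5; one additionally has to verify that the step-5 rejections, all f-rejections, and the rejections of the earlier phases each vanish under the same hypothesis $\gamma > 21/10 + \sqrt{61}/10$. Once these pieces are in place the decomposition closes and the bound $\Oh{n}$ is established.
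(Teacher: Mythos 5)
Your proposal is correct and follows essentially the same route as the paper: bound the cost of a single attempt by $\Oh{n}$, bound the per-attempt restart probability away from $1$ by combining the precondition-success probability with the $o(1)$ f- and b-rejection probabilities (citing \autoref{th:ph4b} and \autoref{th:ph5b} for Phases 4 and 5 and the prior analyses of \cite{DBLP:conf/focs/ArmanGW19, DBLP:conf/soda/GaoW18} for the rest), and conclude that the expected number of restarts is $\Oh{1}$. The paper's own proof is simply a terser version of this argument, delegating the per-phase cost and the earlier-phase rejection bounds entirely to \cite{DBLP:conf/focs/ArmanGW19} rather than re-deriving them as you sketch.
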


\begin{proof}
 From \cite{DBLP:conf/focs/ArmanGW19}, we know that the running-time of each individual phase (e.g. computation of rejection parameters, etc.) is at most $O(n)$, and in addition, we know that the probability of an f-rejection in any Phase is $o(1)$ and the probability of a b-rejection in Phases 1, 2 or 3 is $o(1)$.
 By \autoref{th:ph4b} and \autoref{th:ph5b}, the probability of a b-rejection in Phase 4 or 5 is $o(1)$.
 Therefore, the expected number of restarts is $O(1)$, and the overall running-time of \ipwl is $O(n)$.
\end{proof}

\section{Implementation}
\label{sec:implementation}
In this section we highlight some aspects of our \ipwl implementation.
The generator is implemented in modern \textsc{C++} and relies on Boost Multiprecision\footnote{%
	\url{https://github.com/boostorg/multiprecision} (V~1.76.0)
}
to handle large integer and rational numbers that occur even for relatively small inputs.

\subsection{Graph representation}
\ipwl requires a dynamic graph representation capable of adding and removing edges, answering edge existence and edge multiplicity queries, enumerating a node's neighborhood, and sampling edges weighted by their multiplicity.
A careful combination of an adjacency vector and a hash map yields expected constant work for all operations.

In practice, however, we find that building and maintaining these structures is more expensive than using a simpler, asymptotically sub-optimal, approach.
To this end, we exploit the small and asymptotically constant average degree of plib degree sequences and the fact that most queries do not modify the data structure.
This allows us to only use a compressed sparse row (CSR) representation that places all neighborhoods in a contiguous array~$A_C$ and keep the start indices~$A_I$ of each neighborhood in a separate array; neighborhoods are maintained in sorted order and neighbors may appear multiple times to encode multi-edges.

Given an edge list, we can construct a CSR in time $\Oh{n+m}$ using integer sorting.
A subsequent insertion or deletion of edge $uv$ requires time $\Oh{\deg(u) + \deg(v)}$; these operations, however, occur at a low rate.
Edge existence and edge multiplicity queries for edge $uv$ are possible in time $\Oh{\log \min(\deg(u), \deg(v))}$ by considering the node with the smaller neighborhood (as $\degseq$ is ordered $u \le v$ implies $\deg(u) \ge \deg(v)$).
Assuming plib degrees, these operations require constant expected work.
Randomly drawing an edge weighted by its multiplicity is implemented by drawing a uniform index $j$ for $A_C$ and searching its incident node in $A_I$ in time $\Oh{\log n}$\footnote{Observe that constant time look-ups are straight-forward by augmenting each entry in $A_C$ with the neighbor. We, however, found the contribution of the binary search non-substantial.}.

Several auxiliary structures (\eg indices to non-simple edges, numbers of several sub-structures, et cetera) are maintained requiring $\Oh{m}$ work in total. 
Where possible, we delay their construction to the beginning of Phase~3 in order to not waste work in case of an early rejection in Phases~1 or 2.

\subsection{Parallelism}\label{subsec:impl-parallel}
While \ipwl seems inherently sequential (e.g. due to the dependence of each switching step on the previous steps), it is possible to parallelize aspects of the algorithm.
In the following we sketch two non-exclusive strategies.
These approaches are practically significant, but are not designed to yield a sub-linear time complexity.

	\paragraph*{Intra-run}
	As we discuss in \autoref{sec:evaluation}, the implementation's runtime is dominated by the sampling of the initial multigraph and construction of the CSR.
	These in turn spend most time with random shuffling and sorting. Both can be parallelized~\cite{DBLP:journals/corr/abs-2009-13569,DBLP:journals/ipl/Sanders98}.

	\paragraph*{Inter-run} 
		If \ipwl restarts, the following attempt is independent of the rejected one.
		Thus, we can execute $P$ instances of \ipwl in parallel and return the ``first'' accepted result.
		Synchronization is only used to avoid a selection bias towards quicker runs:
		all processors assign globally unique indices to their runs and update them after each restart.
		We return the accepted result with smallest index and terminate processes working on results with larger indices prematurely.
		The resulting speed-up is bounded by the number of restarts which is typically a small constant.

\subsection{Configuration model}
As the majority of time is spend sampling the initial graph~$G$ and building its CSR representation, we carefully optimize this part of our implementation.
First, we give an extended description of the configuration model that remains functionally equivalent to \autoref{sec:algorithm}.

\textsl{
Given a degree-sequence $\degseq = (d_1, \ldots, d_n)$, let $G$ be a graph with $n$ nodes and no edges.
For each node $u \in V$ place $d_u$ marbles labeled $u$ into an urn.
Then, randomly draw without replacement two marbles with labels $a$ and $b$, respectively.
Append label~$a$ to an initially empty sequence $A$ and analogously label~$b$ to $B$.
Finally, add for each $1 \le i \le m$ the edge $\{A[i], B[i]\}$ to $G$.
}

We adopt a common strategy~\cite{DBLP:journals/corr/abs-2003-00736} to implement sampling without replacement.
First produce a sequence $N[1\dots2m]$ representing the urn, \ie the value $i$ is contained $d_i$ times.
Then randomly shuffle $N$ and call the result $N'$.
Finally, partition $N'$ arbitrarily to obtain the aforementioned sequences $A$ and $B$ of equal sizes.
For our purpose, it is convenient to choose the first and second halves of $N'$, \ie $A = N'[1 \ldots m]$ and $B = N'[m{+}1\ldots2m]$.
\footnote{To ``shuffle'' or ``random permute'' refers to the process of randomly reordering a sequence such that any permutation occurs with equal probability.}

Our parallel implementation shuffles~$N$ with a shared memory implementation based on~\cite{DBLP:journals/ipl/Sanders98}.
We then construct a list of all pairs in both orientations and sort it lexicographically in parallel~\cite{DBLP:journals/corr/abs-2009-13569}.
In the resulting sequence, each neighborhood is a contiguous subsequence.
Hence, we can assign the parallel workers to independent subproblems by aligning them to the neighborhood boundaries.
The sequential algorithm follows the same framework to improve locality of reference in the data accesses.
It uses a highly tuned Fisher-Yates shuffle based on~\cite{DBLP:journals/tomacs/Lemire19} and the integer sorting SkaSort\footnote{\url{https://github.com/skarupke/ska_sort}}.

Both shuffling algorithms are modified almost halving their work.
The key insight is that the distribution of graphs sampled remains unchanged if we only shuffle $A$ and allow an arbitrary permutation of $B$ (or vice versa).
This can be seen as follows.
Assume we sampled $A$ and $B$ as before and computed graph $G_{A,B}$.
Then, we let an adversary choose an arbitrary permutation~$\pi_B$ of $B$ without knowing~$A$.
If we apply $\pi_B$ to $B$ before adding the edges, the resulting $G_{A,\pi_B(B)}$ is in general different from $G_{AB}$.
We claim, however, that $G_{A,B}$ and $G_{A,\pi_B(B)}$ both are equally distributed samples of the configuration model.
We can recover the original graph by also applying $\pi_B$ to $A$, \ie $G_{\pi_B(A),\pi_B(B)} = G_{A,B}$.
Let $P_m$ denote the set of all $m!$ permutations of a sequence of length~$m$, and note that the composition $\circ\colon P_m \times P_m \to P_m$ is a bijection.
Further recall that $A$ is randomly shuffled and all its permutations $\pi_A \in P_m$ occur with equal probability.
Thus, as $\pi_A$ is uniformly drawn from $P_m$, so is $(\pi_A \circ \pi_B) \in P_m$.
In conclusion, the distribution of edges is independent of the adversary's choice.

To exploit this observation, we partition $N$ into two subsequences $N'[1 \ldots k]$ and $N'[k+1 \ldots 2m]$.
Each element is assigned to one subsequence using an independent and fair coin flip.
While partitioning and shuffling are both linear time tasks, in practice, the former can be solved significantly faster (in the parallel algorithm~\cite{DBLP:journals/ipl/Sanders98}, it is even a by-product of the assignment of subproblems to workers).
Observe that with high probability both sequences have roughly equal size, \ie $|k - m| = \Oh{\sqrt m}$.
We then only shuffle the larger one (arbitrary tie-breaking if $k=m$), and finally output the pairs $(N'[i], N'[m+i])$ for all $1 \le i \le m$.
\vspace{0.299em}

\section{Empirical evaluation}
\label{sec:evaluation}
In the following, we empirically investigate our implementation of \ipwl.

To reaffirm the correctness of our implementation and empirically support the uniformity of the sampled graphs, we used unit tests and statistical tests.
For instance, we carried out $\chi^2$-tests over the distribution of all possible graphs for dedicated small degree sequences $\degseq$ where it is feasible to fully enumerate $\mathcal G(\degseq)$.
Additionally, we assert the plausibility of rejection parameters.

The widely accepted, yet approximate, Edge-Switching MCMC algorithm provides a reference to existing solutions.
We consider two implementations:
\textsc{NetworKit-ES}, included in NetworKit~\cite{DBLP:journals/netsci/StaudtSM16} and based on~\cite{DBLP:conf/alenex/GkantsidisMMZ03}, was selected for its readily availability and flexibility.
\textsc{Fast-ES} is our own solution that is at least as fast as all open sequential implementations we are aware of.
For the latter, we even exclude the set-up time for the graph data structure.
To their advantage, we execute an average of $10$ switches per edge (in practice, common  choices~\cite{Milo2003,DBLP:conf/alenex/GkantsidisMMZ03,DBLP:conf/waw/RayPS12} are $10$ to $30$).
Increasing this number improves the approximation of a uniform distribution, but linearly increases the work.

In each experiment below, we generate between $100$ and $1000$ random power-law degree sequences with fixed parameters $n$, $\gamma$, and minimal degree $d_{min}$ analogously to the \textsc{PowerlawDegreeSequence} generator of NetworKit.
Then, for each sequence, we benchmark the time it takes for the implementations to generate a graph and report their average.
In the plots, a shaded area indicates the 95\%-confidence interval.
The benchmarks are built with GNU g++-9.3 and executed on a machine equipped with an AMD EPYC 7452 (32 cores) processor and 128~GB RAM running Ubuntu 20.04.

\paragraph*{Running-time scaling in n}
In \autoref{fig:dep_n} we report the performance of \ipwl and the {Edge-Switching} implementations for degree sequences with $\gamma \approx 2.88$, $d_{min} = 1$, and $n = 2^k$ for integer values $10 \le k \le 28$.
Our \ipwl implementation generates a graph with $n \approx 10^6$ nodes in $0.26$ seconds.
The plot also gives evidence towards \ipwl's linear work complexity.
Comparing with the {Edge-Switching} implementations, we find that \ipwl runs faster.
We can conclude that in this setting, the provably uniform \ipwl runs just as fast, if not faster, than the approximate solution.

\begin{figure}[!t]
 \centering
 \begin{minipage}{0.49\textwidth}
  \centering
  \includegraphics[scale=0.45]{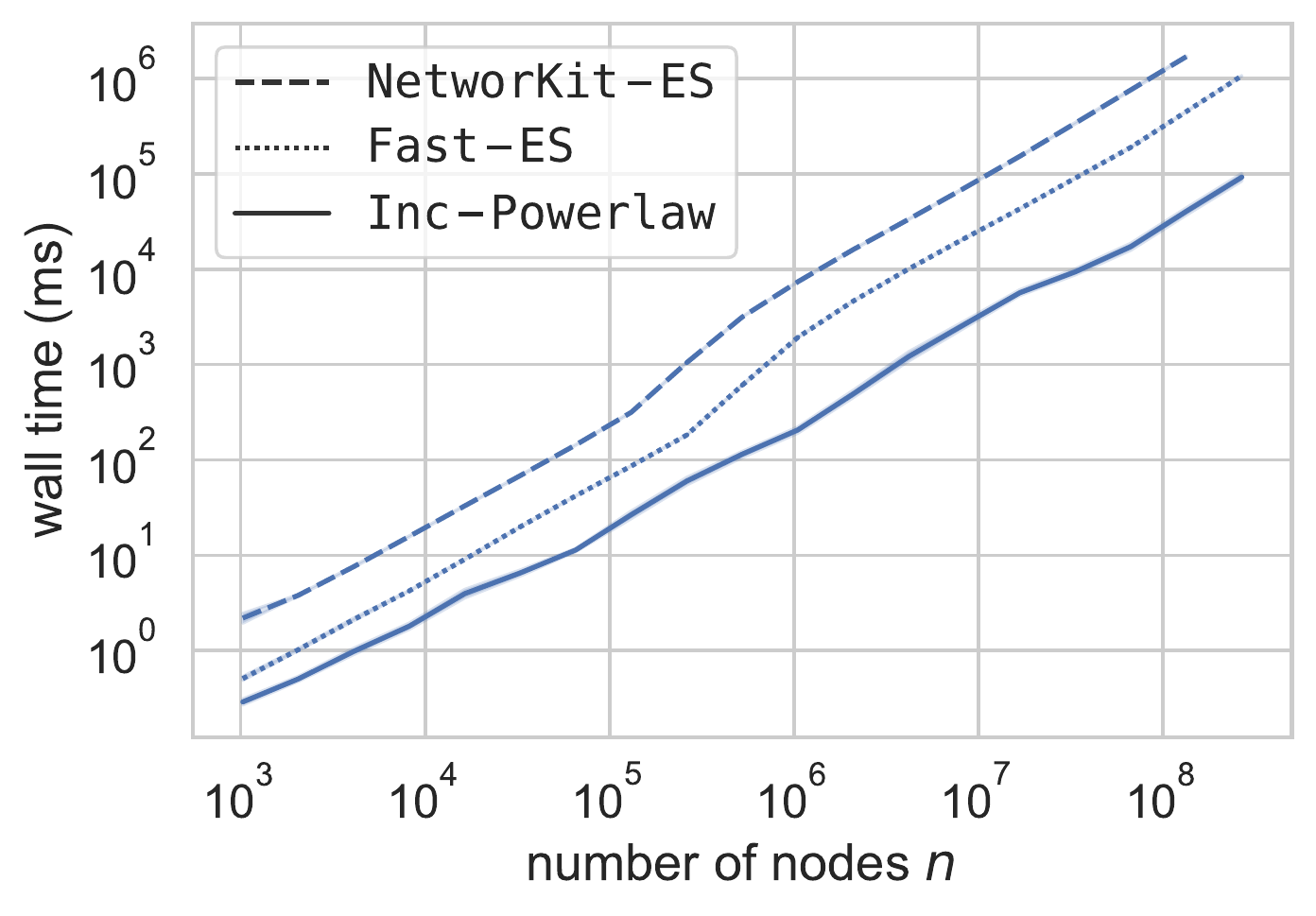}
  \caption{Average running-time of sequential \ipwl and Edge-Switching.}
  \label{fig:dep_n}
 \end{minipage}%
 \hfill
 \centering
 \begin{minipage}{0.49\textwidth}
  \centering
  \includegraphics[scale=0.45]{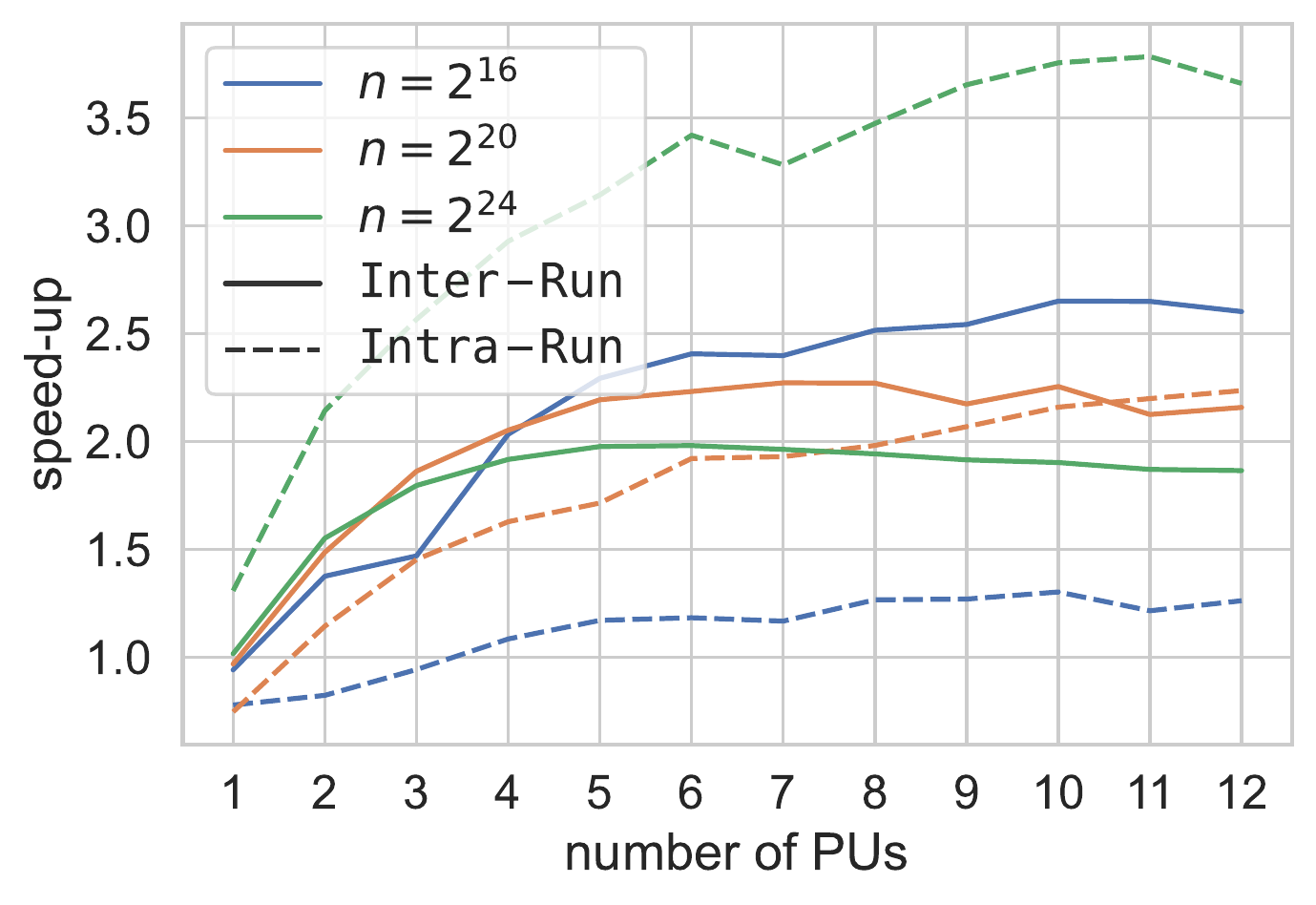}
  \caption{Speed-up of the parallel variants \parinter and \parintra over sequential \ipwl.}
  \label{fig:parallel_dep_n}
 \end{minipage}
\end{figure}

\paragraph*{Speed-up of the parallel variants}
\autoref{fig:parallel_dep_n} shows the speed-up of our \parinter and \parintra parallelizations over sequential \ipwl.
We generate degree sequences with $n = 2^k$ for $k \in \{16, 20, 24\}$, measure the average running-time of the parallel variants when using $1 \leq p \leq 12$ PUs (processor cores), and report the speed-up in the average running-time of the parallel variants over sequential \ipwl.

For $p=5$ and $n = 2^{24}$, we observe an \parinter parallelization speed-up of $2.0$;
more PUs yield diminishing returns as the speed-up is limited by the number of runs until a graph is accepted which is $3.0$ on average for the aforementioned parameters.
Another limiting factor is the fact that rejected runs stop prematurely.
Hence, the accepting run (i) requires on average more work and (ii) forms the critical path that cannot be accelerated by \parinter.

For the same $n$, \parintra achieves a speed-up of $3.8$ for $p = 11$ PUs; here, the remaining unparallelized sections limit the scalability as governed by Amdahl's law~\cite{DBLP:conf/isca/Rodgers85}.
Overall, \parinter yields a better speed-up if the the number of restarts is high (smaller $n$), whereas \parintra yields a better speed-up for larger $n$ if the overall running-time is dominated by generating the initial graph (see \autoref{table:restarts}).

\begin{table}[b]
  \centering
\caption{The average number of runs until acceptance and peak speed-ups as observed in \autoref{fig:parallel_dep_n}.}
  \begin{tabular}{ | c | c | c | c | c | }
    \hline
    $n$ & runs & \parinter & \parintra \\ \hline \hline
    $2^{16}$ & $3.9$ & $2.7$ for $p=10$ & $1.3$ for $p=10$ \\ \hline
    $2^{20}$ & $3.3$ & $2.3$ for $p=7$ & $2.2$ for $p=12$ \\ \hline
    $2^{24}$ & $3.0$ & $2.0$ for $p=5$ & $3.8$ for $p=11$ \\
    \hline
  \end{tabular}
  \label{table:restarts}
\end{table}

%\medskip
\paragraph*{Different values of the power-law exponent $\gamma$}
Next, we investigate the influence of the power-law exponent $\gamma$.
The guarantees on \ipwl's running-time only hold for sequences with $\gamma \gtrapprox 2.88102$, so we expect a superlinear running-time for $\gamma \leq 2.88$.
For $\gamma \geq 3$, the expected number of non-simple edges in the initial graph is much lower, so we expect the running-time to remain linear but with decreased constants.
\autoref{fig:dep_gamma} shows the average running-time of \ipwl for sequences for various $\gamma$. % $\gamma \in \{2.85, 2.86, 2.87, 2.88, 3.0, 4.0\}$.

\begin{figure}[!t]
 \centering
 \begin{minipage}{0.49\textwidth}
  \centering
  \includegraphics[scale=0.45]{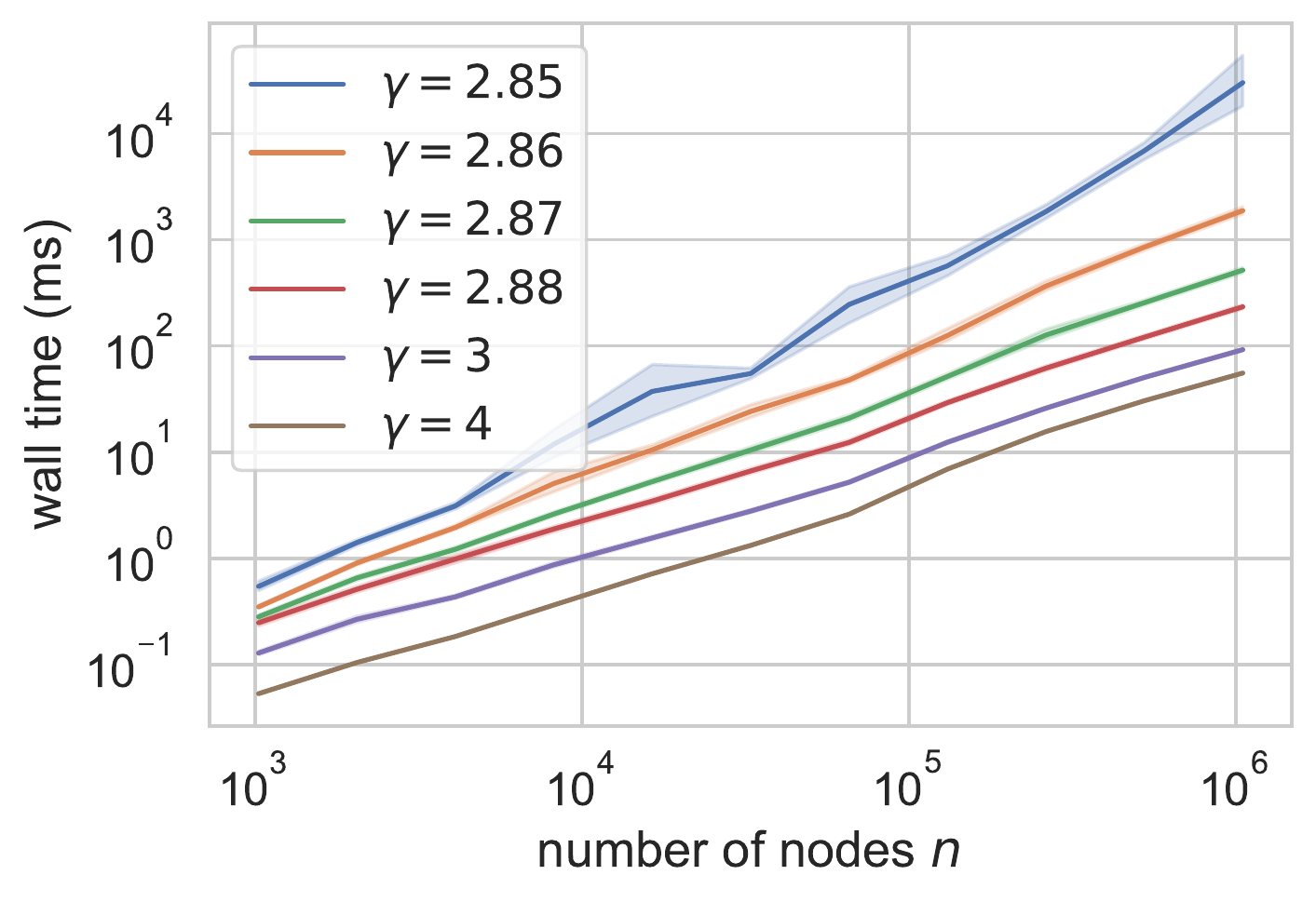}
  \caption{The average running-time of \ipwl in dependence of $n$ for different values of $\gamma$.}
  \label{fig:dep_gamma}
 \end{minipage}%
 \hfill
 \centering
 \begin{minipage}{0.49\textwidth}
  \centering
  \includegraphics[scale=0.45]{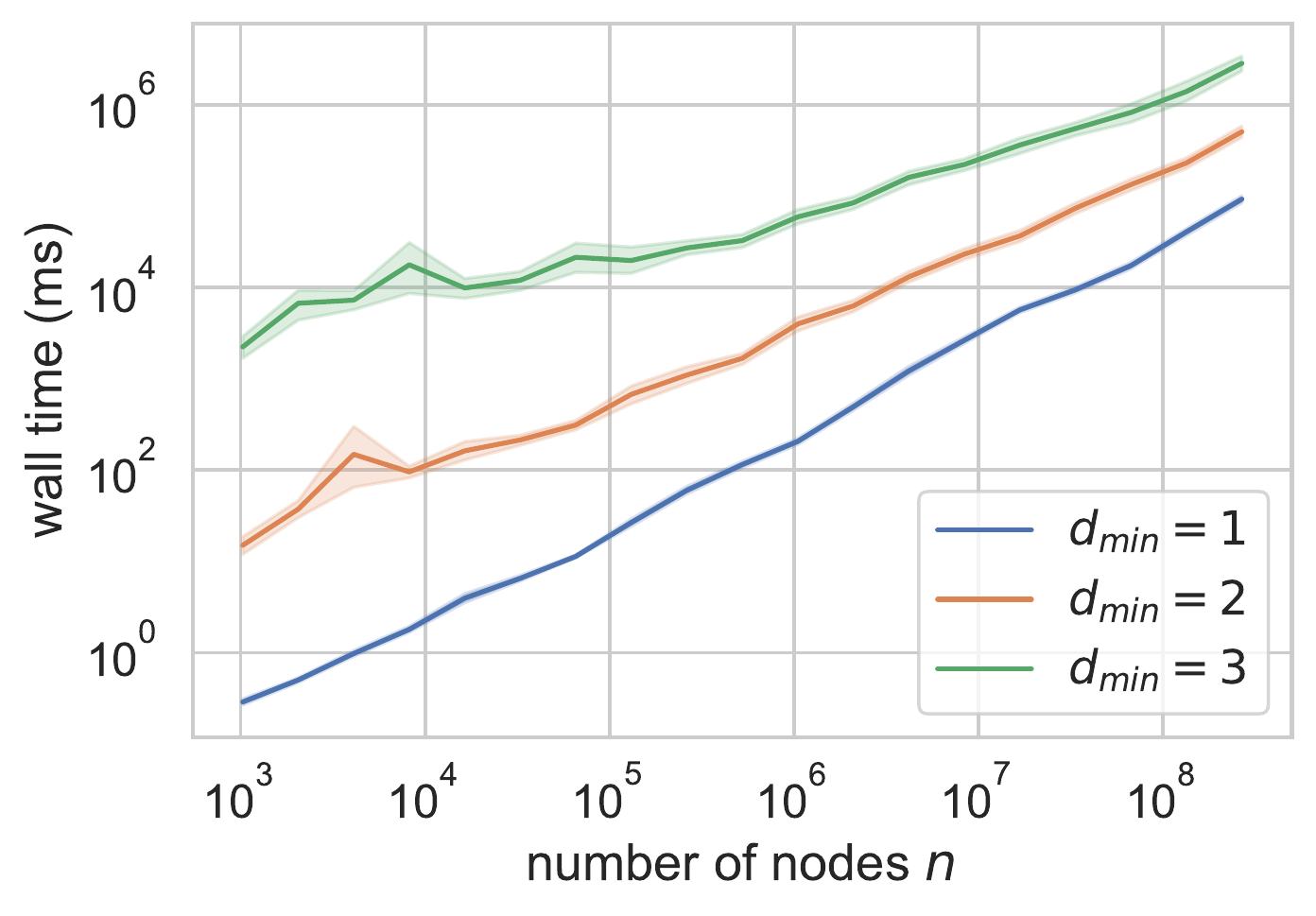}
  \caption{The average running-time of \ipwl for sequences with minimum degree $d_{min} \in \{1, 2, 3\}$.}
  \label{fig:dep_d-min}
 \end{minipage}
\end{figure}

\begin{table}[b]
  \caption{Average number of runs until acceptance and average number of switching steps in an accepting run.}
  \begin{tabular}{ | c | c | c | c | c | c | c | }
    \hline
    & \multicolumn{2}{c|}{$d_{min} = 1$} & \multicolumn{2}{c|}{$d_{min} = 2$} & \multicolumn{2}{c|}{$d_{min} = 3$} \\ \hline
    $n$ & runs & steps & runs & steps & runs & steps \\ \hline \hline
    $2^{16}$ & $3.9$ & $4.9$ & $53.0$ & $24.1$ & $1160.7$ & $51.0$ \\ \hline
    $2^{20}$ & $3.3$ & $10.2$ & $18.1$ & $49.6$ & $164.0$ & $110.8$ \\ \hline
    $2^{24}$ & $3.0$ & $19.0$ & $9.6$ & $92.8$ & $43.3$ & $208.5$ \\ \hline
    $2^{28}$ & $2.5$ & $32.3$ & $5.6$ & $161.6$ & $16.7$ & $375.3$ \\
    \hline
  \end{tabular}
  \centering
  \label{table:runs_steps_d_min}
\end{table}

For $\gamma \leq 2.88$, we observe an increase in the running-time.
The slope of the curve for $\gamma = 2.85$ also suggests that the running-time becomes non-linear for lower values of $\gamma$.
Overall, the requirement of $\gamma \gtrapprox 2.88102$ appears to be relatively strict.
In particular, we observe that the higher maximum degrees of sequences with $\gamma \leq 2.85$ greatly increase the rejection probability in Phases 1 and 2.

For $\gamma \geq 3$, the average running-time decreases somewhat but remains linear.
For these values of $\gamma$, we observe that the initial number of non-simple edges in the graph is small, and that the algorithm almost always accepts a graph on its first run, so the overall running-time approaches the time required to sample the initial graph with the configuration model.

%\medskip
%\noindent $\mathbf{d_{min} > 1}$:
\paragraph*{Higher average degrees}
The previously considered sequences drawn from an unscaled power-law distribution tend to have a rather small average degree of approximately $1.44$.
On the other hand, many observed networks feature higher average degrees~\cite{barabasi2016network,DBLP:conf/aaai/RossiA15}.
To study \ipwl on such networks, we sample degree sequences with minimum degree $d_{min} \in \{1, 2, 3\}$.
For $d_{min} = 2$ and $d_{min} = 3$, the average degree $\overline{d}$ of the sequences increases to $\overline{d}=3.39$ and $\overline{d}=5.44$ respectively.
We then let the implementation generate graphs for each choice of $d_{min} \in \{1, 2, 3\}$, and report the average time as a function of $n$ in \autoref{fig:dep_d-min}.

As a higher average degree increases the expected number of non-simple edges in the initial graph, we observe a significant increase in running-time.
For instance, for $n = 2^{20}$ we find that the average number of double-edges in the initial graph are $6.5$, $41.6$ and $98.8$ for $d_{min} = 1, 2$ and $3$, respectively, and the overall number of switching steps until a simple graph is obtained increases from $10.2$ for $d_{min}=1$ to $49.6$ for $d_{min}=2$ and to $110.8$ for $d_{min}=3$.
This in turn greatly increases the chance for a rejection to occur and the number of runs until a graph is accepted (see \autoref{table:runs_steps_d_min}).

However, for large values of $n \geq 2^{24}$ the effect of the higher average degrees on the running-time becomes less pronounced.
This is because the probability of a rejection at any step in the algorithm decreases quite fast with $n$, thus even if the number of switching steps increases, the number of runs decreases.
We can conclude that \ipwl is efficient when generating graphs that are either very sparse ($\overline{d} \lessapprox 5$) or very large ($n \gtrapprox 2^{24}$), but the algorithm is much less efficient when generating small to medium sized graphs ($n \lessapprox 2^{24}$) with medium average degree ($\overline{d} \gtrapprox 5$).

\paragraph*{Speed-up of \parinter for higher average degrees}
While \ipwl's sequential work increases with a higher average degree, so do the number of independent runs that can be parallelized by \parinter. \autoref{fig:parallel_d-min} shows the speedup of \parinter over sequential \ipwl for sequences with $d_{min} = 2$ when using $2 \leq p \leq 24$ PUs and $d_{min} = 3$ using $2 \leq p \leq 32$ PUs.
For $n = 2^{20}$ nodes, \parinter yields a speed-up of $6.4$ with $p=14$ PUs for $d_{min} = 2$ and $12.8$ for $p=31$ PUs for $d_{min} = 3$ (see \autoref{table:restarts_d_min}).

\begin{figure}[t]
 \centering
 \begin{subfigure}{0.49\textwidth}
 \resizebox{\textwidth}{!}{
  	\includegraphics{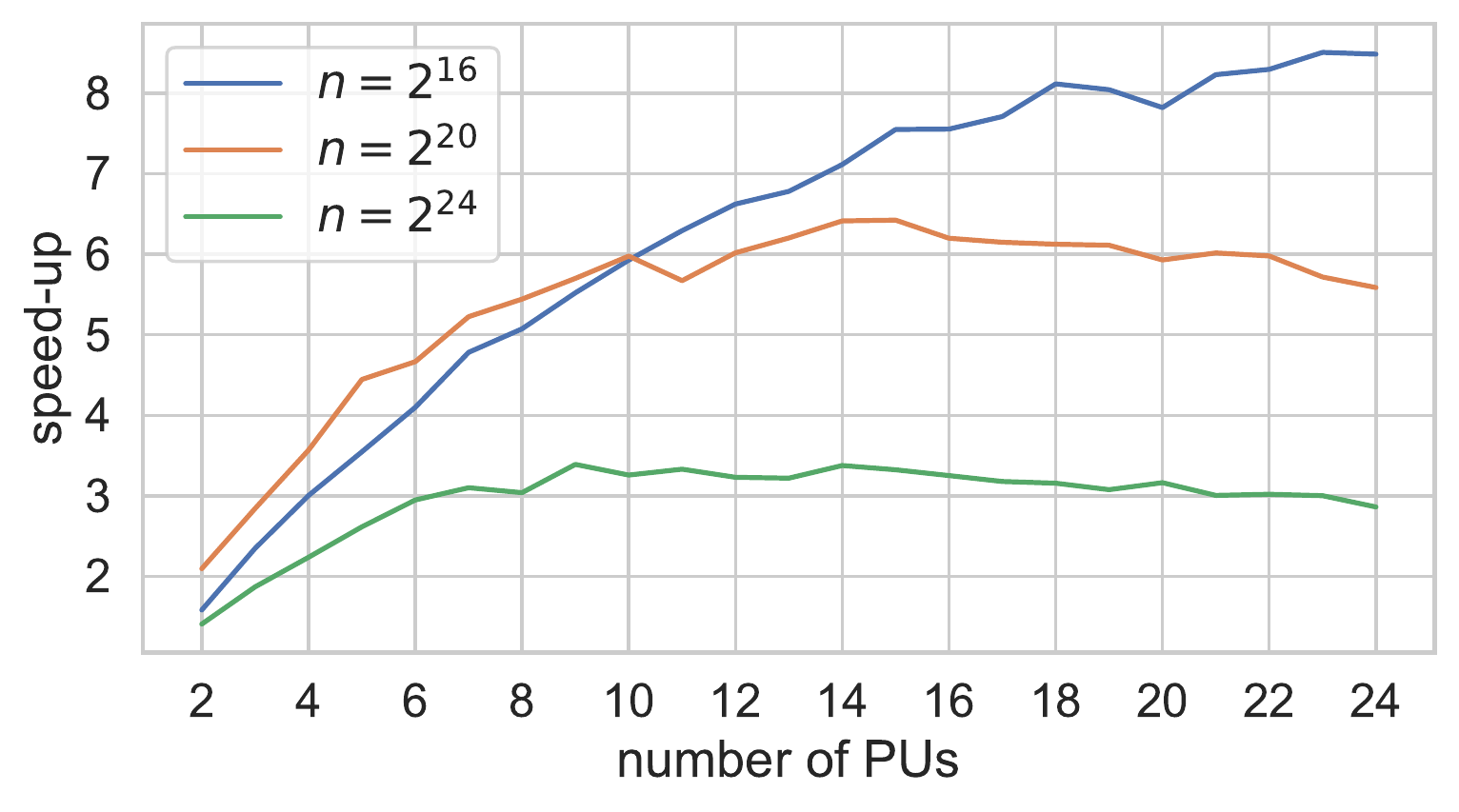}
  }
  \caption{\centering Speed-ups for $d_{min} = 2$.}
 \end{subfigure}
 \hfill
 \centering
 \begin{subfigure}{0.49\textwidth}
 \resizebox{\textwidth}{!}{
  \includegraphics{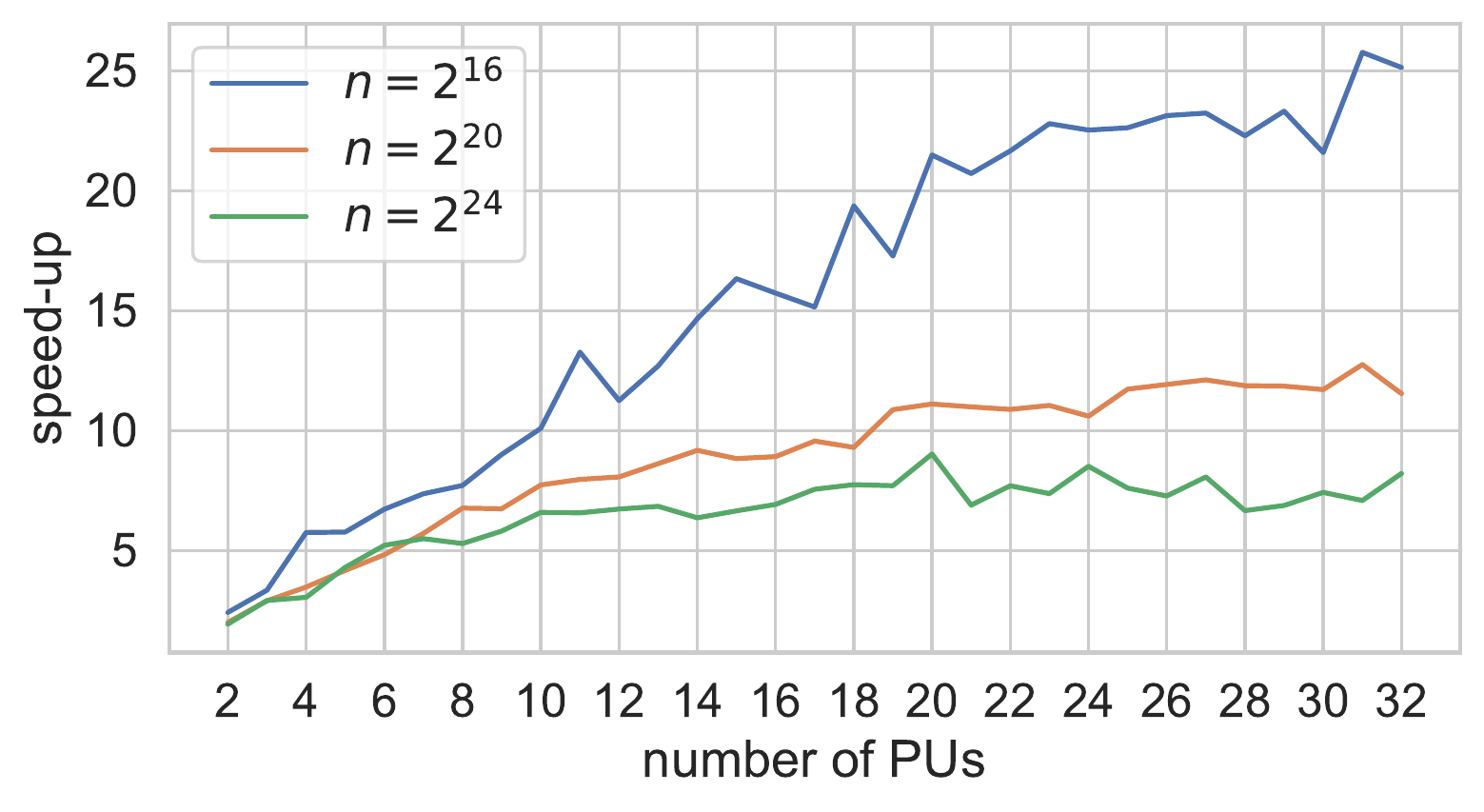}
  }
  \caption{\centering Speed-ups for $d_{min} = 3$.}
 \end{subfigure}
 \vspace{1em}
 \caption{\centering Speed-up of \parinter for $d_{min} \in \{2, 3\}$.}
 \label{fig:parallel_d-min}
\end{figure}

\begin{table}[t]
  \vspace{2em}
  \caption{Average number of runs until acceptance and peak speed-ups as observed in \autoref{fig:parallel_d-min}.}
  \begin{tabular}{ | c | c | c | c | c | }
    \hline
    & \multicolumn{2}{c|}{$d_{min} = 2$} & \multicolumn{2}{c|}{$d_{min} = 3$} \\ \hline
    $n$ & runs & speedup & runs & speedup \\ \hline \hline
    $2^{16}$ & $53.0$ & $8.5$ for $p=24$ & $1160.7$ & $25.8$ for $p=31$ \\ \hline
    $2^{20}$ & $18.1$ & $6.4$ for $p=14$ & $164.0$ & $12.8$ for $p=31$ \\ \hline
    $2^{24}$ & $9.6$ & $3.4$ for $p=8$ & $43.3$ & $9.0$ for $p=20$ \\
    \hline
  \end{tabular}
  \centering
  \label{table:restarts_d_min}
 \vspace{-1em}
\end{table}

As expected, we can achieve a higher speed-up for higher $d_{min}$, so we can partially mitigate the increase in running-time by taking advantage of the higher parallelizability.
On the other hand, we still experience the limited scaling due to accepting runs being slower than rejecting runs.

\section{Conclusions}
For the first time, we provide a complete description of \ipwl which builds on and extends previously known  results~\cite{DBLP:conf/focs/ArmanGW19,DBLP:conf/soda/GaoW18}.
To the best of our knowledge, \ipwl is the first practical implementation to sample provably uniform graphs from prescribed power-law-bounded degree sequences with $\gamma \ge 2.88102$.
In an empirical study, we find that \ipwl is very efficient for small average degrees;
for larger average degrees, we observe significantly increased constants in \ipwl's running-time which are partially mitigated by an improved parallelizability.

While the expected running-time of \ipwl is asymptotically optimal, we expect practical improvements for higher average degrees by improving the acceptance probability in Phases 3, 4 and 5 of the algorithm (\eg by finding tighter lower bounds or by adding new switchings).
It is also possible that the requirement on $\gamma$ could be lowered;
our experiments indicate that the acceptance probability in Phases 1 and 2 should be improved to this end.
Our measurements also suggest that more fine-grained parallelism may be necessary to accelerate accepting runs.

\bibliography{pld-bibliography}

\appendix

\section{Summary of symbols used}
\label{sec:apx-table-defs}
\begin{table}[!h]
	%\caption{Summary of symbols used}
	\label{tab:symbols}
 	\definecolor{tablegray}{gray}{0.9}
	\rowcolors{1}{white}{tablegray}
	\centering
	\begin{tabular}{l|l|p{0.8\textwidth}}
    \toprule
    Symbol & Section & Remark \\
    \midrule
    $[x]_k$ & \ref{subsec:notation} & $k$-th factorial moment $[x]_k = \prod_{i=0}^{k-1} (x-i)$ \\
$ij$, $\{i, j\}$ & \ref{subsec:notation} & edge connecting nodes $i$ and $j$ \\
$(i, j)$ & \ref{subsec:notation} & pair connecting nodes $i$ and $j$ \\
$m_{i, j}$ & \ref{subsec:notation} & multiplicity of edge $ij$, number of pairs $(i, j)$ \\
$v_1v_2v_3$ & \ref{subsec:phase3} & two-star centered at $v_1$ \\
$v_1v_2v_3v_4$ & \ref{subsec:phase4} & three-star centered at $v_1$ \\
$\degseq$, $d_i$ & \ref{subsec:notation} & degree sequence $\degseq = (d_1, \dots, d_n) \in \mathbb N ^ n$ \\
$\mathcal{G}(\degseq)$ & \ref{subsec:notation} & set of simple graphs matching degree sequence $\degseq$ \\
plib & \ref{subsec:notation}	& power-law distribution-bounded \\
$\gamma$ & \ref{subsec:notation}	& power-law exponent \\
$h$ & \ref{subsec:notation} & number of heavy nodes $h = n^{1 - \delta (\gamma - 1)}$ where $\frac{1}{2 \gamma - 3} < \delta < \frac{2 - 3/(\gamma - 1)}{4 - \gamma}$ \\
$M_k$ & \ref{subsec:notation} & $M_k = \sum_{i=1}^n [d_i]_k$, $k$-th moment of the degree distribution  \\
$H_k$ & \ref{subsec:notation} & $H_k = \sum_{i=1}^h [d_i]_k$ \\
$L_k$ & \ref{subsec:notation} & $L_k = M_k - H_k$\\
$W_i$ & \ref{subsec:p12preconditions} & sum of the multiplicities of heavy multi-edges incident to $i$ \\
$W_{i,j}$ & \ref{subsec:p12preconditions} & $W_{i,j} = W_i + 2 m_{i,i} - m_{i,j}$\\
$f(G)$ & \ref{subsec:phase1} & number of valid switchings on $G$\\
$\overline{f}(G)$ & \ref{subsec:phase1} & upper bound on $f(G)$\\
$\underline{f}(G)$ & \ref{subsec:phase1} & lower bound on $f(G)$\\
$b(G')$ & \ref{subsec:phase1} & number of valid switchings that produce $G'$\\
$\overline{b}(G')$ & \ref{subsec:phase1} & upper bound on $b(G')$\\
$\underline{b}(G')$ & \ref{subsec:phase1} & lower bound on $b(G')$\\
$m_l(G)$ & \ref{subsec:phase3prec} & number of light single loops in $G$ \\
$m_t(G)$ & \ref{subsec:phase3prec} & number of light triple-edges in $G$ \\
$m_d(G)$ & \ref{subsec:phase3prec} & number of light double-edges in $G$ \\
$b(G', \overline{V}_i)$ & \ref{subsec:phase3} & number of structures in $G'$ matching a valid switching that creates $\overline{V}_i$ \\
$\underline{b}(G';i)$ & \ref{subsec:phase3} & lower bound on $b(G', \overline{V}_i)$ \\
$A_2$ & \ref{subsec:phase3} & let $A_2 = \sum\nolimits_{i=1}^{d_1} d_i$, then for any node $v$, $A_2$ is an upper bound on the number of simple two-stars where $v$ is one of the outer nodes\\
$\tau$ & \ref{subsec:phase4} & switching type chosen in iterations of Phase 4 and 5 \\
$\rho_\tau$ & \ref{subsec:phase4} & probability of chosing type $\tau$ \\
$\varepsilon$ & \ref{subsec:phase4} & $\varepsilon = 28 M_2^2 / M_1^3$, upper bound on the probability of choosing a type $t_a, t_b$, or $t_c$ switching in Phase 4 \\
$k$ & \ref{subsec:phase4} & number of additional pairs created by Phase 4 or 5 booster switchings \\
$B_k$ & \ref{subsec:phase4} & let $B_k = \sum_{i=1}^{d_1} [d_{h+i}]_k$, then for any node $v$, $B_k$ is an upper bound on the number of simple light $k$-stars where $v$ is one of the outer nodes\\
$\xi$ & \ref{subsec:phase5} & $\xi = 32 M_2^2 / M_1^3 + 36 M_4 L_4 / M_2 L_2 M_1^2 + 32 M_3^2 / M_1^4$, upper bound on the probability of choosing a switching type other than type $d$ or $(1,0,0)$ in Phase 5 \\
		\bottomrule
	\end{tabular}
\end{table}

\section{Additional Proofs}
\label{sec:apx-proofs}
\subsection{Correctness proofs of lower bounds}
\label{subsec:proofbounds}
For Phases 3, 4 and 5, we use new lower bounds on the number of structures in the graph created by a valid switching.

For Phase 3, we factorize the lower bound on the number of inverse $l$-switchings used in \pld to obtain two new lower bounds $\underline{b}_l(G';0)$ and $\underline{b}_l(G';1)$.
\begin{lemma}\label{lemma:lb3}
Let $\mathcal{S}$ be the class of graphs with $m_t$ light triple-edges, $m_d$ light double-edges and $m_l$ light single loops (and no other non-simple edges).
For all $G \in \mathcal{S}$, and all light simple two-stars $v_1 v_2 v_3$ in $G$ that are created by a valid $l$-switching, we have
\begin{align}
 \underline{b}_l(\mathcal{S};0) &\leq b_l(G, \emptyset)
 \\
 \underline{b}_l(\mathcal{S};1) &\leq b_l(G, v_1 v_2 v_3).
\end{align}
\end{lemma}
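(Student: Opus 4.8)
The plan is to give each of the two quantities a direct combinatorial reading and then bound it from below by the stated closed form, subtracting a (possibly loose) upper bound on the structures that fail to qualify. Throughout I would use that in a valid $l$-switching the centre $v_1$ is the erased loop's light node, so the created two-star $v_1 v_2 v_3$ is light and simple, and that every light node $i$ satisfies $d_i \le d_h$ whereas an arbitrary node has degree at most $d_1$. The main point of the argument is that each bound is expressed only through quantities fixed by the class $\mathcal S$ and $\degseq$ (namely $L_2$, $M_1$, $A_2$, $m_t$, $m_d$, $m_l$, $d_h$, $d_1$), so it is uniform over $G \in \mathcal S$.

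For $\underline{b}_l(\mathcal S;0)$, recall that $b_l(G,\emptyset)$ counts light simple ordered two-stars, whereas $L_2 = \sum_{i>h}[d_i]_2$ counts \emph{all} ordered pairs of distinct half-edges at light centres. Every light simple ordered two-star is such a half-edge pair, so it suffices to upper-bound the pairs that are \emph{not} simple two-stars, i.e. those using at least one half-edge of a loop or of a multi-edge. I would bound, per non-simple edge and per light endpoint $u$, the ordered half-edge pairs at $u$ meeting that edge: for a triple-edge $[d_u]_2 - [d_u-3]_2 = 6 d_u - 12 \le 6 d_h$, for a double-edge $[d_u]_2 - [d_u-2]_2 = 4 d_u - 6 \le 4 d_h$, and for a loop the trivial bound $[d_u]_2 \le d_h^2$ on all pairs at $u$. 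Summing over the (at most two light) endpoints of each triple- and double-edge and over the $m_t$, $m_d$, $m_l$ edges yields the total correction $12 m_t d_h + 8 m_d d_h + m_l d_h^2$. Since a pair using several non-simple half-edges is only over-counted, this is a genuine upper bound on the bad pairs, giving $b_l(G,\emptyset) \ge L_2 - 12 m_t d_h - 8 m_d d_h - m_l d_h^2 = \underline{b}_l(\mathcal S;0)$.

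For $\underline{b}_l(\mathcal S;1)$, the quantity $b_l(G,v_1 v_2 v_3)$ is the number of simple ordered pairs $(v_4,v_5)$ compatible with the chosen two-star, i.e. the simple ordered pairs minus those sharing a node with $\{v_1,v_2,v_3\}$ or realising a forbidden edge $v_2 v_4$ or $v_3 v_5$. I would first count the simple ordered pairs exactly: $M_1 = \sum_i d_i$ counts all half-edges, removing $2 m_l$ loop half-edges leaves $M_1 - 2 m_l$ ordered non-loop pairs, and discarding the $4$ (resp. $6$) ordered pairs of each double- (resp. triple-) edge gives $M_1 - 6 m_t - 4 m_d - 2 m_l$ simple ordered pairs. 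For the incompatible ones, the pairs incident to a fixed node $w$ number at most $2 d_w$, so the shared-node pairs total at most $2 d_{v_1} + 2 d_{v_2} + 2 d_{v_3} \le 4 d_1 + 2 d_h$ (using $d_{v_1} \le d_h$ for the light centre and $d_{v_2}, d_{v_3} \le d_1$); and a pair $(v_4,v_5)$ with $v_2 v_4 \in E$ is exactly a two-star centred at $v_4$ with outer node $v_2$, of which there are at most $A_2$, so the two forbidden-edge families contribute at most $2 A_2$. Subtracting these upper bounds yields $b_l(G,v_1 v_2 v_3) \ge M_1 - 6 m_t - 4 m_d - 2 m_l - 2 A_2 - 4 d_1 - 2 d_h = \underline{b}_l(\mathcal S;1)$.

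I expect the main obstacle to be the careful bookkeeping rather than any single hard estimate. One must check that each correction is used as an \emph{upper} bound on disqualified structures so that the inequalities point the right way (here over-counting is always safe), and — the most error-prone point — that the $A_2$ term correctly identifies each forbidden-edge pair with a two-star having the prescribed outer node, while consistently using $d_h$ for the light centre $v_1$ and $d_1$ for the possibly-heavy outer nodes $v_2, v_3$.
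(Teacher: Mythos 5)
Your proposal is correct and follows essentially the same route as the paper's proof: start from $L_2$ (resp.\ $M_1$), subtract over-estimates of the disqualified structures, and arrive at the identical correction terms $12 m_t d_h + 8 m_d d_h + m_l d_h^2$ (resp.\ $6m_t + 4m_d + 2m_l + 2A_2 + 4d_1 + 2d_h$), including the same use of $d_h$ for the light centre versus $d_1$ for the possibly-heavy outer nodes and of $A_2$ for the forbidden-edge pairs. The only difference is cosmetic: you organize the first count as half-edge pairs at the light centre meeting a non-simple edge, while the paper picks an ordered pair of the multi-edge and then the remaining outer node, yielding the same bounds.
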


\begin{proof}
We have $\underline{b}_l(\mathcal{S};0) = L_2 - 12 m_t d_h - 8 m_d d_h - m_l d_h^2$, and $b_l(G, \emptyset)$ is equal to the number of light simple ordered two-stars in $G$.
We now show that $\underline{b}_l(\mathcal{S};0)$ is a lower bound on $b_l(G, \emptyset)$.
First, each graph $G$ matching the sequence contains exactly $L_2$ light ordered two-stars.
We then overestimate the number of two-stars that are not simple, and subtract this from $L_2$: a two-star $v_1 v_2 v_3$ is not simple if one of the edges $v_1 v_2$ or $v_1 v_3$ is a triple-edge, a double-edge or a loop.
There are at most $12 m_t d_h$ that contain a triple-edge, as there are $m_t$ ordered triple-edges ($6m_t$ ordered pairs), at most $d_h$ choices for the remaining node of the two-star (any light node has degree smaller than $d_h$), and $2$ ways to combine the selected pairs into the two-star as shown in \autoref{fig:lswitching}.
Similarly, there are at most $8 m_d d_h$ two-stars that contain a double-edge, as there are $m_d$ double-edges ($4m_d$ ordered pairs), at most $d_h$ choices for the remaining node and $2$ ways to combine the selected pairs into the two-star, and there are at most $m_l d_h^2$ two-stars that contain a loop, as there are $m_l$ loops and at most $d_h^2$ choices for the outer nodes of the two-star.

For the second bound, we have $\underline{b}_l(\mathcal{S};1) = M_1 - 6 m_t - 4 m_d - 2 m_l - 2 A_2 - 4 d_1 - 2 d_h$ and $b_l(G, v_1 v_2 v_3)$ is set to the number of simple ordered pairs $(v_4, v_5)$ that (a) do share nodes with the two-star $v_1 v_2 v_3$ and (b) where $v_2 v_4$ and $v_3 v_5$ are non-edges.
Each graph $G$ matching the sequence contains exactly $M_1$ ordered pairs. There are at most $6 m_t$ ordered pairs that contain a triple-edge, at most $4 m_d$ ordered pairs that contain a double-edge and at most $2 m_l$ ordered pairs that contain a loop.
For case (a), there are at most $4 d_1$ ordered pairs where $v_4 \in \{v_2, v_3\}$ or $v_5 \in \{v_2, v_3\}$, and at most $2 d_h$ ordered pairs where $v_4 = v_1$ or $v_5 = v_1$. For case (b), we know that $A_2 = \sum_{i=1}^{d_1} d_i$ is an upper bound on the number of two-paths $v_2 v_4 v_5$ or $v_3 v_5 v_4$ \cite{DBLP:conf/soda/GaoW18}, so there are at most $2 A_2$ such pairs.
\end{proof}

\noindent For Phase 4, we use three new lower bounds $\underline{b}_t(G;0)$, $\underline{b}_t(G;1)$ and $\underline{b}_\tau(G;i+1)$.
\begin{lemma}\label{lemma:lb4}
Let $\mathcal{S}$ be the class of graphs with $m_t$ light triple-edges and $m_d$ light double-edges (and no other non-simple edges).
For all $G \in \mathcal{S}$, all simple three-stars $v_1 v_3 v_5 v_7$ in $G$, and all triplets with $1 \leq i \leq k$ additional pairs $\overline{V}_{i+1}(S) = (v_1 v_3 v_5 v_7 v_2 v_4 v_6 v_8, \dots, v_{6 + 2i - 1} v_{6 + 2i})$ in $G$ that are created by a valid Phase 4 switching $S$, we have
\begin{align}
 \underline{b}_t(\mathcal{S};0) &\leq b_t(G, \emptyset)
 \\
 \underline{b}_t(\mathcal{S};1) &\leq b_t(G, v_1 v_3 v_5 v_7)
 \\
 \underline{b}_\tau(\mathcal{S};i+1) &\leq b_\tau(G, \overline{V}_{i+1}(S)).
\end{align}
\end{lemma}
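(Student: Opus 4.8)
The plan is to establish all three inequalities by the over-counting scheme already used for \autoref{lemma:lb3}. For each quantity I would start from the exact number of the corresponding unconstrained ordered structure over all graphs matching $\degseq$ --- namely $M_3$ for three-stars, $L_3$ for light three-stars, and $M_1$ for pairs --- and then subtract, for every condition that can make such a structure ineligible for $b_t(G,\emptyset)$, $b_t(G,v_1v_3v_5v_7)$ or $b_\tau(G,\overline{V}_{i+1}(S))$, a crude upper bound on the number of structures it excludes. Because every such bound depends only on $\degseq$ and on $m_t,m_d$, which are constant across the class $\mathcal S$, the resulting expressions are simultaneously valid lower bounds for all $G\in\mathcal S$ and all valid switchings $S$. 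The only thing that must be verified is a covering property: each ineligible structure is charged to at least one subtracted term, so over-counting a structure twice is harmless but missing one is fatal. Throughout I would bound an arbitrary degree by $d_1$, a light degree by $d_h$, and use $A_2,B_2,B_3$ as the stated uniform bounds on simple or light stars through a fixed outer node.

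For $\underline{b}_t(\mathcal S;0)$ I would subtract from $M_3$ the non-simple three-stars, \ie those whose centre is an endpoint of a multi-edge. For a fixed triple-edge the ordered three-stars that use one of its three parallel pairs number $[d]_3-[d-3]_3\le 9d_1^2$ per endpoint, hence at most $18 d_1^2$ over both endpoints; summing over the $m_t$ triple-edges gives $18 m_t d_1^2$, and the same count with two parallel pairs gives $12 m_d d_1^2$ for the double-edges. Subtracting both yields the claimed bound. For $\underline{b}_\tau(\mathcal S;i+1)$ I would proceed identically, subtracting from $M_1$ the $6 m_t+4 m_d$ ordered pairs that lie on a multi-edge; the pairs meeting the triplet or an earlier pair, of which there are at most $16 d_1$ (two incidences for each of the eight triplet nodes) and $4(i-1)d_1$ (for the $i-1$ previously created pairs); and the at most $2 A_2$ pairs whose insertion would recreate an edge the switching must delete, which is the two-star bound $A_2$ in its two orientations. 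This reproduces the stated expression.

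The delicate case is $\underline{b}_t(\mathcal S;1)$, where the light three-star $v_2v_4v_6v_8$ must in addition avoid the already-placed three-star $v_1v_3v_5v_7$. Starting from $L_3$, I would subtract the non-simple light three-stars exactly as above but with the light bound $d_h$, giving $18 m_t d_h^2+12 m_d d_h^2$; the stars centred at $v_1$ itself, bounded by $d_h^3$; the stars whose centre is a light neighbour of $v_1$ (these are precisely the ones creating the forbidden edge $v_1v_2$, and since $v_3,v_5,v_7$ are themselves neighbours of $v_1$ they also subsume every star whose centre equals one of $v_3,v_5,v_7$ as well as every star having $v_1$ as an outer node), bounded by $\sum_w [d_w]_3\le B_3$ over the at most $d_1$ light neighbours $w$ of $v_1$; the stars placing an outer node on $v_3,v_5$ or $v_7$, bounded position-by-position by $9 B_2$; and the stars whose outer node at some position is a multi-neighbour of the matching fixed node, bounded by $3(m_t+m_d)B_2$ using $m_t+m_d$ as a crude count of the multi-edges at the fixed node and $B_2$ for the light three-stars through the far endpoint.

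The main obstacle is exactly this bookkeeping. I must check that the covering property holds, so that no ineligible light three-star escapes all five families of terms, while keeping the individual counts crude enough to remain uniform over $\mathcal S$; the cleanest point is the observation above that $v_3,v_5,v_7$ are neighbours of $v_1$, which collapses several coincidence cases into the single $B_3$ term and explains why only $d_h^3$ and $9 B_2$ remain for the residual node-sharing cases. I must also respect the incremental-relaxation split, charging here only the ``no multi-edge'' part of the $v_3v_4,v_5v_6,v_7v_8$ constraints and deferring their residual single-edge parts to the additional-pair rejections of step~5, whose bounds are furnished by $\underline{b}_\tau(\mathcal S;i+1)$.
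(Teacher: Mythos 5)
Your proposal is correct and follows essentially the same route as the paper's proof: start from the exact counts $M_3$, $L_3$, $M_1$ of ordered structures (which depend only on $\degseq$), subtract per-condition over-estimates expressed through $m_t$, $m_d$, $d_1$, $d_h$, $A_2$, $B_2$, $B_3$, and resolve the coincidence cases for $\underline{b}_t(\mathcal{S};1)$ by the same observation that $v_3,v_5,v_7$ are neighbours of $v_1$, so those cases are absorbed into the $B_3$ term. The individual charges ($18 m_t d_1^2$, $12 m_d d_1^2$, $d_h^3 + 9B_2$, $B_3 + 3(m_t+m_d)B_2$, $16d_1 + 4(i-1)d_1 + 2A_2$) all match the paper's accounting.
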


\begin{proof}
 We have $\underline{b}_t(\mathcal{S};0) = M_3 - 18 m_t d_1^2 - 12 m_d d_1^2$ and $b_t(G, \emptyset)$ is set to the number of simple ordered three-stars in $G$.
Analogously to Lemma 1, we show that $\underline{b}_t(\mathcal{S};0)$ is a lower bound on $b_t(G, \emptyset)$ by starting with $M_3$, the number of ordered three-stars in a graph $G$ matching the sequence and then subtracting an overestimate of the number of non-simple three-stars.
The only non-simple three-stars contain a triple-edge or a double-edge.
There are at most $18 m_t d_1^2$ non-simple three-stars that contain a triple-edge, as there are $m_t$ triple-edges in $G$, at most $d_1$ choices for each of the two remaining outer nodes, and $18$ ways to label the star as shown in \autoref{fig:tswitching}.
Similarly, there are at most $12 m_d d_1^2$ three-stars that contain a double-edge.

 For the second bound, we have $\underline{b}_t(G';1) = L_3 - 18 m_t d_h^2 - 12 m_d d_h^2 - B_3 - 3 (m_t + m_d) B_2 - d_h^3 - 9 B_2$, and $b_t(G, v_1 v_3 v_5 v_7)$ is equal to the number of light simple ordered three-stars that a) do not share any nodes with the three-star $v_1 v_3 v_5 v_7$ created by $S$, b) have no edge $v_1 v_2$ and no multi-edges $v_3 v_4$, $v_5 v_6$, $v_7 v_8$.
Each graph matching the sequence contains exactly $L_3$ light ordered simple three-stars.
Analogous to $\underline{b}_t(\mathcal{S};0)$, there are at most $18 m_t d_h^2 + 12 m_d d_h^2$ light three-stars that are not simple.
There are at most $d_h^3 + 9 B_2$ light simple ordered three-stars $v_2 v_4 v_6 v_8$ of case a): first, if $v_2 = v_1$, then there are at most $d_h^3$ choices for the outer nodes.
In addition, we know that for each node $v_4$ in $G$, there are at most $B_2 = \sum_{i=1}^{d_1} [d_{h+i}]_2$ light simple two-stars $v_2 v_6 v_8$ where $v_2 v_4$ is an edge \cite{DBLP:conf/soda/GaoW18}, so there are at most $9 B_2$ three-stars where $v_4, v_6, v_8 \in \{v_3, v_5, v_7\}$.
The only remaining case is if $v_2 \in \{v_3, v_5, v_7\}$, or if any of $v_4, v_6, v_8 = v_1$, but in this case $v_1 v_2$ is an edge, so this falls under case b). For case b), it suffices to subtract $B_3 + 3 (m_t + m_d) B_2$ three-stars: we know that for each node $v_1$ in $G$, there are at most $B_3 = \sum_{i=1}^{d_1} [d_{h+i}]_3$ light simple three-stars $v_2 v_4 v_6 v_8$ where $v_2 v_1$ is an edge.
For a three-star where any of $v_3 v_4$, $v_5 v_6$, $v_7 v_8$ is a multi-edge, we have at most $3 (m_t + m_d) B_2$ choices, as there are $m_t + m_d$ multi-edges in $G$ and choices for the first outer node, and at most $B_2$ choices for the center and the two remaining outer nodes.

 For the third bound, we have $\underline{b}_\tau(\mathcal{S};i+1) = M_1 - 6 m_t - 4 m_d - 16 d_1 - 4 (i - 1) d_1 - 2 A_2$, and $b_\tau(G, \overline{V}_{i+1}(S))$ is equal to the number of simple ordered pairs in $G$, that a) do not share any nodes with the triplet, or the previous $i - 1$ pairs, and b) have no forbidden edges with the triplet.
First, each graph matching the sequence contains exactly $M_1$ ordered pairs.
At most $6 m_t$ of those pairs are in a  triple-edge, and at most $4 m_d$ pairs are in a double-edge. For case a), there are at most $16 d_1$ ordered pairs that share a node with the triplet, as for each of the $8$ nodes of the triplet, there are at most $d_1$ choices for the second node of the simple pair and $2$ ways to label the pair.
Similarly, there are at most $4 (i - 1)$ pairs that share a node with the $i - 1$ pairs relaxed in the previous steps.
Finally, there are at most $2 A_2$ pairs of case b): each of the two nodes in the pair cannot have an edge with one designated node of the triplet, and starting from that node, there are at most $A_2$ pairs connected to it via an edge.
\end{proof}

\goodbreak
\noindent In Phase 5, we use three new lower bounds $\underline{b}_d(G;0)$, $\underline{b}_d(G;1)$ and $\underline{b}_\tau(G;k+1)$.
\begin{lemma}\label{lemma:lb5}
Let $\mathcal{S}$ be the class of graphs with $m_d$ light double-edges (and no other non-simple edges).
For all $G \in \mathcal{S}$, all simple two-stars $v_1 v_3 v_5$ in $G$, and all doublets with $1 \leq i \leq k$ additional pairs $\overline{V}_{i+1}(S) = (v_1 v_3 v_5 v_2 v_4 v_6, \dots, v_{4 + 2i - 1} v_{4 + 2i})$ in $G$ that are created by a valid Phase 5 switching $S$, we have
\begin{align}
 \underline{b}_d(\mathcal{S};0) &\leq b_d(G, \emptyset)
 \\
 \underline{b}_d(\mathcal{S};1) &\leq b_d(G, v_1 v_3 v_5)
 \\
 \underline{b}_\tau(\mathcal{S};i+1) &\leq b_\tau(G, \overline{V}_{i+1}(S)).
\end{align}
\end{lemma}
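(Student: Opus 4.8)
The plan is to follow the template of \autoref{lemma:lb3} and \autoref{lemma:lb4} without deviation: for each of the three inequalities I start from the \emph{exact} number of the corresponding substructure in an arbitrary $G \in \mathcal{S}$ --- a factorial-moment quantity ($M_2$ for all ordered two-stars, $L_2$ for light ordered two-stars, and $M_1$ for ordered pairs) that depends only on $\degseq$ --- and then subtract an \emph{over}estimate of the substructures that a valid Phase~5 switching may not produce. Since over-counting the forbidden structures can only shrink the resulting expression, any such estimate certifies a genuine lower bound on $b_d(G, \emptyset)$, $b_d(G, v_1 v_3 v_5)$, and $b_\tau(G, \overline{V}_{i+1}(S))$.

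For $\underline{b}_d(\mathcal{S};0) = M_2 - 8 m_d d_1$ I note that the only non-simple two-stars contain a double-edge (the sole surviving non-simple edge in Phase~5); reading off factors as in \autoref{lemma:lb3}, the $m_d$ double-edges yield $4 m_d$ ordered pairs, there are at most $d_1$ choices for the remaining outer node, and $2$ orientations, so at most $8 m_d d_1$ two-stars are excluded. For $\underline{b}_d(\mathcal{S};1) = L_2 - 8 m_d d_h - 6 B_1 - 3 d_h^2$ the light center forces the analogous non-simple count down to $8 m_d d_h$, and I additionally remove the light two-stars meeting the already-relaxed two-star $v_1 v_3 v_5$: at most $3 d_h^2$ whose (light) center lies in $\{v_1, v_3, v_5\}$, and at most $6 B_1$ whose outer node does (three target nodes times two outer positions), the latter since fixing an outer node $t$ and summing $\deg(v_2) - 1$ over the at most $d_1$ light centers $v_2$ adjacent to $t$ is bounded by $\sum_{i=1}^{d_1} d_{h+i} = B_1$.

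For the pair bound $\underline{b}_\tau(\mathcal{S};i+1) = M_1 - 4 m_d - 12 d_1 - 4(i-1) d_1 - 2 A_2$ I count the additional simple pairs against the $M_1$ ordered pairs and subtract: the $4 m_d$ pairs lying inside a double-edge; the $12 d_1$ pairs meeting one of the six doublet nodes ($6 \times 2 \times d_1$); the $4(i-1) d_1$ pairs meeting one of the $2(i-1)$ nodes of the previously relaxed pairs; and the $2 A_2$ pairs that would create one of the (at most two) edges the switching must avoid --- exactly the two-path estimate $A_2 = \sum_{i=1}^{d_1} d_i$ used for case~(b) of \autoref{lemma:lb3} \cite{DBLP:conf/soda/GaoW18}, applied once per constrained endpoint.

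The main obstacle is completeness and uniformity of the case analysis rather than any single computation. Because these $\max(m_1, m_2, m_3) = 2$ boosters are new and the bound must hold simultaneously for \emph{every} admissible type $\tau$, I must verify that the type-specific forbidden edges (\eg that $v_1 v_7$ is a non-edge in \autoref{fig:typeVbswitching}) are in each case dominated by the single $2 A_2$ term, and that no excluded two-star or pair is silently omitted --- an omission would break the lower-bound inequality rather than merely weaken it. Matching each excluded case to the correct per-node constant ($d_1$, $d_h$, $A_2$, or $B_1$) and checking it against the figure for each type is where the care is needed.
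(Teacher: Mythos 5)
Your proposal is correct and follows essentially the same route as the paper's proof: each bound starts from the exact factorial-moment count ($M_2$, $L_2$, $M_1$), then subtracts overestimates of the excluded substructures with the same constants and the same per-term justifications ($8 m_d d_1$ and $8 m_d d_h$ for non-simple two-stars, $6 B_1$ and $3 d_h^2$ for node-sharing with $v_1 v_3 v_5$, and $4 m_d + 12 d_1 + 4(i-1)d_1 + 2A_2$ for the additional pairs). The only difference is presentational: the paper dispatches the third bound by reference to the analogous Phase~4 argument, whereas you spell out the adjusted constants and explicitly flag the need to check that every booster type's forbidden edges are covered by the $2A_2$ term.
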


\begin{proof}
We have $\underline{b}_d(G';0) = M_2 - 8 m_d d_1$, and $b_d(G, \emptyset)$ is set to the number of simple ordered two-stars in $G$.
We now show that $\underline{b}_d(G';0)$ is a lower bound on $b_d(G, \emptyset)$.
There are $M_2$ ordered two-stars in a graph $G$ matching the sequence.
Of these, the only ones that are not simple are the ones that contain a double-edge, and $G$ can contain at most $8 m_d d_1$ such two-stars.

For the second bound, we have $\underline{b}_d(G;1) = L_2 - 8 m_d d_h - 6 B_1 - 3 d_h^2$, and $b_d(G, v_1 v_3 v_5)$ is equal to the number of light simple ordered two-stars in that do not share any nodes with the two-star $v_1 v_3 v_5$.
Similar to the first step above, $G$ contains exactly $L_2$ light ordered two-stars, and at most $8 m_d d_h$ light ordered two-stars that are not simple.
The only remaining cases are two-stars $v_2 v_4 v_6$ that share any nodes with the first two-star.
First, there are at most $6 B_1$ two-stars where $v_4, v_6 \in \{v_1, v_3, v_5\}$, as $B_1$ is an upper bound on the number of pairs $v_2 v_4$ or $v_2 v_6$ where $v_4$ or $v_6$ are connected to one of $v_1, v_3, v_5$ via an edge.
The other remaining case is $v_2 \in \{v_1, v_3, v_5\}$. In this case, there are at most $d_h^2$ choices for the remaining nodes of the two-star, so in total there are at most $3 d_h^2$ such two-stars.

The proof for $\underline{b}_\tau(\mathcal{S};i+1)$ is analogous to the proof for the similar bound in Phase 4 (see above).
\end{proof}

\end{document}